\documentclass[a4paper,UKenglish,cleveref, autoref, thm-restate]{oasics-v2021}

\usepackage{algorithm, algorithmic, amsmath, amssymb, float}
\usepackage[title]{appendix}
\usepackage{graphicx} 
\usepackage{times}


\title{A Graph Matching Based Approach for the Multi-Depot Capacitated Vehicle Routing Problem}
\titlerunning{MDCVRP} 
\author{Jayant Chandwani$^*$, Pranav M R$^*$, Anand Jat, Anshu Ostwal, Diptendu Chatterjee, Anand Narasimhamurthy}{Department of CSIS, BITS Pilani K K Birla Goa Campus, Goa - 403726, India}{f20230356@goa.bits-pilani.ac.in, f20230340@goa.bits-pilani.ac.in, diptenduc@goa.bits-pilani.ac.in, anandn@goa.bits-pilani.ac.in}{}{}
\date{}

\authorrunning{Jayant, Pranav, Anand, Anshu, Diptendu, Anand} 

\Copyright{Jayant, Pranav, Anand, Anshu, Diptendu, Anand} 

\ccsdesc[500]{Applied Computing}

\keywords{Capacitated Vehicle Routing Problem, Blossom Algorithm, Graph Matching}

\nolinenumbers 
\hideOASIcs 

\begin{document}
	\maketitle

    \def\thefootnote{*}\footnotetext{Jayant Chandwani and Pranav M R are the lead authors and contributed equally to this work.}\def\thefootnote{\arabic{footnote}}

	\begin{abstract}
		The Multi-Depot Capacitated Vehicle Routing Problem (MDCVRP) asks for minimum-cost delivery tours from several capacitated depots to a set of customers. Like most vehicle-routing variants it is NP-hard, so practical solvers must trade solution quality against speed. We revisit this trade-off through the lens of graph matching. Adapting a matching-based construction first developed for the Traveling Tournament Problem, we present two algorithms, Cluster-First and Match-First, that reduce routing to a sequence of minimum-weight matchings. This is more than a heuristic. We prove that for tours of up to two targets the matching formulation solves the MDCVRP exactly in polynomial time for any number of depots, and that both algorithms are constant-factor approximations, with a tight factor of two, in the structured regimes. This matching optimum coincides with the exact combinatorial-auction optimum, so the auction serves as a strong quality baseline. On instances of $1000$ customers and $20$ depots our methods match or slightly beat that baseline in tour length while running two to three orders of magnitude faster, in tens of milliseconds against tens of seconds, a scale at which exact and auction-based solvers become impractical. Because Cluster-First routes each depot independently, the approach also re-routes cheaply when new customers arrive.\\
        Code for reproducibility: \url{https://github.com/Floret-Labs/graph-matching-mdcvrp}
\end{abstract}

	\section{Background and Related Work}
        The Vehicle Routing Problem (VRP), introduced by Dantzig and Ramser~\cite{Dantzig1959}, asks for a set of minimum-cost routes, each starting and ending at a depot, that serve every customer exactly once~\cite{Hrvoje2008}.
        Practical variants add constraints such as limited vehicle capacities (the Capacitated VRP, CVRP), delivery time windows, or several geographically distributed depots (the Multi-Depot VRP). This work concerns the \emph{Multi-Depot Capacitated VRP} (MDCVRP), which combines capacitated vehicles with multiple depots, and which we define formally as follows.
\begin{itemize}
\item {\bf Inputs:}
\begin{itemize}
\item A set of delivery vehicles $R = \{r_1,r_2,\ldots,r_K\}$ with corresponding capacities $u_1,u_2,\ldots,u_K$ and initial depot locations $l_{r_1},l_{r_2},\ldots,l_{r_K}$.
\item A set of customers $C = \{c_1,c_2,\ldots,c_n\}$ with corresponding target locations $l_{t_1},l_{t_2},\ldots,l_{t_n}$.
\item A set of non-negative weights $W = \{w_1,w_2,\ldots,w_n\}$, where $w_i$ represents the total weight of the goods in the order placed by customer $c_i$, for $1 \le i \le n$.
\end{itemize}
\item {\bf Output:}
\begin{itemize}
\item A set of tours that originate from and terminate in depots such that each target is visited only once, all customer orders are processed, and an objective is optimized.
\begin{itemize}
\item Objectives that are commonly considered for minimization are the~\emph{sum of the tour costs} and the~\emph{makespan}. Other minimization objectives could include the total number of vehicles and/or penalties for not servicing customer orders.
\end{itemize}
\end{itemize}
\item The VRP problem is typically modeled using a graph where the vertices represent the depot and target locations and the edge weights represent the travel costs between them, assumed to be known beforehand. For convenience, we assume that all locations lie on a 2D Cartesian grid and that all pairwise travel costs are straight-line distances. This assumption can be relaxed without any changes to the approach, since the  input is a graph where edge weights represent the relevant costs.
\end{itemize}

\noindent Like most of its variants, the MDCVRP is NP-hard~\cite{Lenstra1981,Solomon1988,Dror1990,Archetti2005}, so exact methods are impractical beyond small instances and the literature is dominated by heuristics that trade optimality for speed~\cite{ClarkWright,Christofides1976,Khairy2020,bm04,Pu2022,Yesodha2022,Gu2022,Sadati2021}. A recurring strategy for multi-depot instances is \emph{cluster-first, route-next}: group customers by depot so that each cluster's demand fits one vehicle, then route each cluster as a single-depot problem, for example with a nearest-neighbour rule or a general-purpose solver such as Google OR-Tools~\cite{GoogleORTools}. More recently, learning-based methods~\cite{Xin2021_AAAI} have been brought to the multi-depot setting, including the multi-agent reinforcement-learning model of Arishi et al.~\cite{Arishi2023} and the attention-based DeepMDV~\cite{Nasehi2024_DeepMDV}. These produce high-quality solutions on small instances but scale poorly and are expensive to train, which limits their use on the large, dynamically changing instances that arise in practice.

Our approach instead casts routing as graph matching: rather than building routes sequentially, delivery decisions are reduced to pairing targets on a weighted graph and computing minimum-weight matchings with Edmonds' Blossom algorithm~\cite{Edmonds1965_Blossom}. This adapts to the MDCVRP the matching-based construction proposed for the Traveling Tournament Problem~\cite{Diptendu2021}. As a strong point of comparison we additionally formulate the MDCVRP as a combinatorial auction, in which depots bid on small bundles of targets; auction-based formulations have previously been studied for vehicle routing~\cite{Feng2011,Karels2020}.

\paragraph*{Contributions} We present two matching-based algorithms for the MDCVRP, Cluster-First and Match-First, and show that they are near-optimal yet extremely fast. Concretely, we:
\begin{itemize}
    \item Prove that for tours of up to two targets the matching formulation solves the MDCVRP exactly in polynomial time for any number of depots, and that both algorithms are constant-factor approximations, with a tight factor of two, in the structured regimes;
    \item Show that this matching optimum coincides with the exact combinatorial-auction optimum, so the auction is a genuinely strong quality baseline rather than a weak one; and
    \item Demonstrate empirically that the matching methods lose little tour quality relative to the auction while running up to two to three orders of magnitude faster and scaling to instances with $1000$ customers and $20$ depots.
\end{itemize}
Because Cluster-First routes each depot independently, the approach also re-routes cheaply when new customers arrive, the setting that motivates this work. We implement all algorithms in C++.

 	\section{Preliminary Analysis: Selecting a Baseline}\label{sec:prelim}
Our aim is to show that graph matching produces near-optimal routings far faster than conventional solvers, and a claim like that is only as convincing as the baseline it is measured against. Before presenting our own methods, we run a short study whose sole purpose is to fix that baseline: we compare three standard ways of solving the MDCVRP and carry only the strongest forward as the reference point for the main experiments (\cref{sec:results}). All three follow the familiar cluster-first-route-next pattern (assign each target to a depot, then route within each depot) and differ only in how they route each single-depot subinstance:
\begin{itemize}
    \item {\bf Nearest-neighbor (NN)}: solve each single-depot instance greedily, starting from the full vehicle capacity and repeatedly appending the nearest as-yet-unserved target whose demand still fits.
    \item {\bf Two-stage (OR-Tools)}: hand each single-depot instance to the vehicle-routing solver of Google OR-Tools~\cite{GoogleORTools}.
    \item {\bf Combinatorial Auction (CA)}: our own auction formulation (\cref{sec:combinatorialAuction}), in which depots bid on small bundles of targets and a winner-determination step selects a feasible lowest-cost assignment.
\end{itemize}
We put the three contenders to two tests on the base datasets (\cref{sec:exp-setup}):

\noindent \textbf{Does the Combinatorial Auction yield shorter tours than the standard baseline heuristics?}\label{sec:rq1-baseline}

\noindent Figure~\ref{fig:tourLengths_Cap100_Cap150} (a) and (b) summarise the tour lengths obtained across the 100 base instances at two representative capacities. At both capacities the Combinatorial Auction produces consistently shorter tours than both the nearest-neighbor heuristic and the two-stage OR-Tools solver, indicating that the extra modelling effort of the auction translates into a real gain in solution quality.

\noindent \textbf{Is that advantage robust as the vehicle capacity varies?}\label{sec:rq2-baseline-capacity}

\noindent To confirm that this ranking is not an artefact of one capacity setting, we recompute the tours over a range of vehicle capacities, holding the target and depot locations fixed, and average over the 100 datasets (Figure~\ref{fig:capacity_baseline}). The Combinatorial Auction keeps a clear advantage across the whole range. The lone exception is the regime in which capacities dwarf the target demands: there the nearest-neighbor heuristic can occasionally match or beat the auction, simply because it may pack many targets into

 \begin{figure}[H]
 \centering
 \begin{tabular}{cc}
    (a) Capacity = 100 & (b) Capacity = 150 \\
    \includegraphics[width=0.45\textwidth]{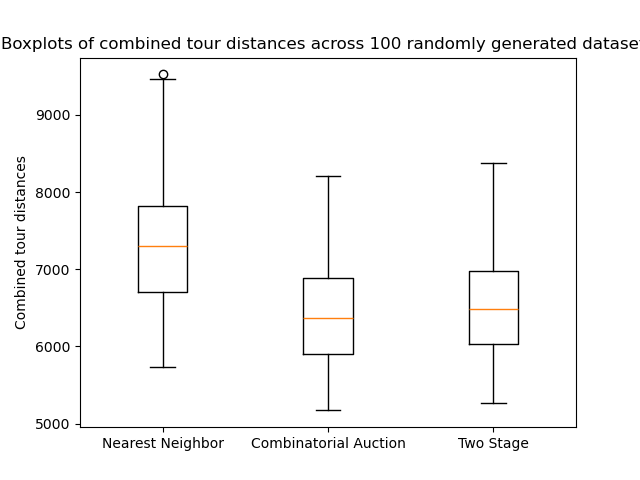} & \includegraphics[width=0.45\textwidth]{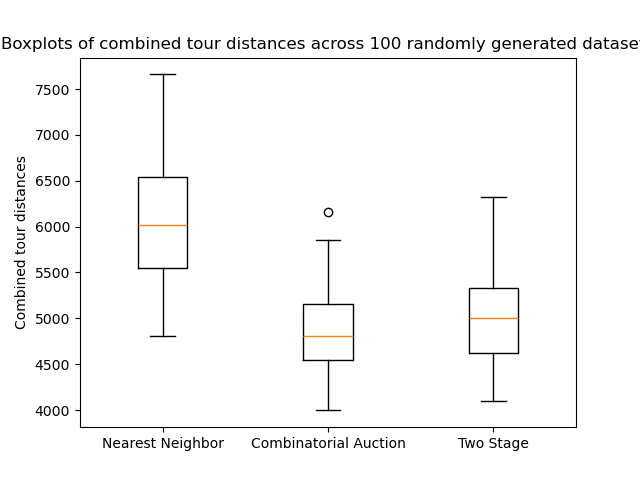}  \\
 \end{tabular}
 \caption{Tour lengths over the 100 base instances for the nearest-neighbour, two-stage, and CA baselines, at vehicle capacity 100 (a) and 150 (b).}
 \label{fig:tourLengths_Cap100_Cap150}
 \end{figure}

 \begin{figure}[H]
 \centering
 \includegraphics[width=0.60\textwidth]{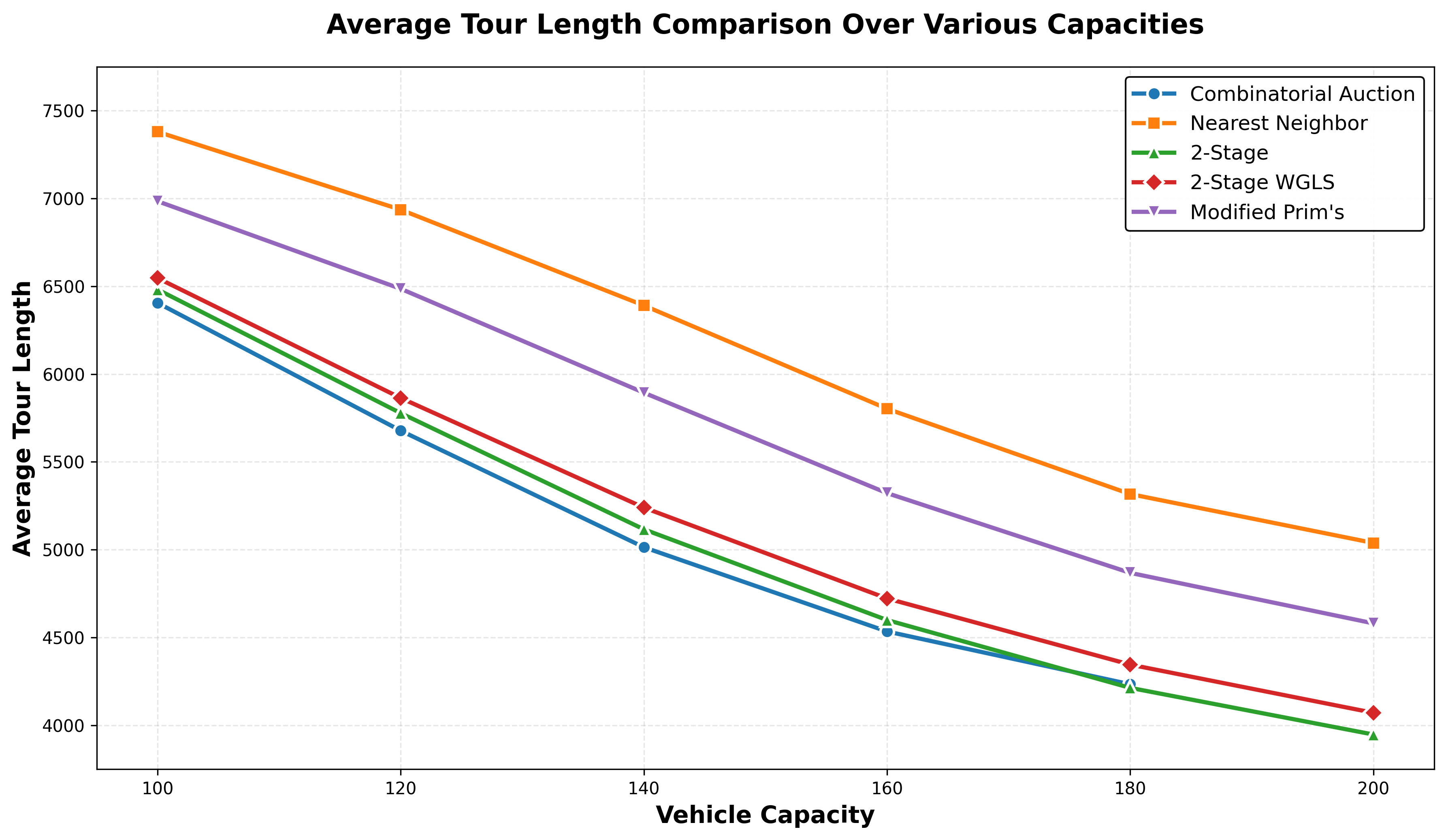}
 \caption{Average tour length as vehicle capacity varies, for the three baselines on the base instances of Figure~\ref{fig:tourLengths_Cap100_Cap150}.}
 \label{fig:capacity_baseline}
 \end{figure}

\noindent one trip whereas we deliberately restrict the auction to small bundles (singletons and pairs). Since that regime is atypical for the capacitated multi-depot setting we study, we do not regard it as a genuine reversal.

The results above clearly establish the Combinatorial Auction as the strongest of the three baselines. We therefore adopt it as our single point of comparison going forward: the larger-scale experiments of \cref{sec:results} pit the graph matching methods, Cluster-First and Match-First, directly against it.

\section{Algorithms and Method Description}
    We now describe in full the two methods that are the focus of this work: a graph matching based method and a Combinatorial Auction based approach. As established in the preliminary analysis (\cref{sec:prelim}), the Combinatorial Auction is the strongest of the standard baseline solvers, so it serves as our point of comparison for the graph matching methods in the larger experiments that follow.
    \subsection{\bf Combinatorial Auction}\label{sec:combinatorialAuction}
    We formulate the MDCVRP as a straightforward combinatorial auction, described below.
    \begin{itemize}
    \item Let $V_1,V_2,\ldots,V_K$ be the set of vehicles presumed to be located at depots. Let $t_1,t_2,\ldots,t_n$ be the targets to be served.
    \item  Let $I_1, I_2, \ldots, I_P$ be the various itemsets, where each itemset is a subset of targets. For example, an itemset $I_j = \{t_1,t_2,t_3\}$ represents the set of targets $t_1,t_2,t_3$.
    \item Each vehicle bids for every itemset, thus the number of bids placed by each vehicle equals the number of itemsets $P$. The bid by each vehicle is the total distance covered in an optimal tour serving the corresponding set of targets. Consider for instance vehicle $V_i$ stationed at depot i. and itemset $I_j = \{t_1,t_2,t_3\}$. The value of the bid made by vehicle $V_i$ for this itemset would be the total distance covered in the optimal tour starting from and ending at depot $i$ and visiting $t_1,t_2,t_3$ exactly once, provided the total weight of all the items in the itemset is less than or equal to the capacity of the vehicle $V_i$ and a high value (eg. $\infty$) otherwise. We collect these bids in a vehicle bid matrix (Table~\ref{bestBidsEachItemsetGeneral}(a)).
   \item To guarantee a feasible solution the itemsets include all singletons: i.e. $\{t_1\},\{t_2\},\ldots,\{t_n\}$
Additionally we considered  all pairs: i.e. $\{t_1,t_2\},\ldots,\{t_{n-1},t_n\}$ and triplets in some of our experiments. Note that in general, calculating the total cost of an optimal tour is the Traveling Salesman Problem (TSP) which is an NP-complete problem. However, since we consider only singletons and other small subsets of targets, the optimal tour cost can be computed in constant time for each itemset by exhaustively considering all possibilities.
\end{itemize}

\begin{table}[h]
  \centering
\begin{tabular}{cc}
  C = \begin{tabular}{c||c|c|c|c|c|}
       &  $I_1$ & $I_2$ & ... & ... & $I_P$ \\ \hline \hline
     $V_1$  & $b_{1,1}$ & $b_{1,2}$ & $\ldots$ & $\ldots$ & $b_{1,P}$ \\ \hline
     ... & ... & ... & ... & ... & ... \\ \hline
    $V_i$  & $b_{i,1}$ & $b_{i,2}$ & $\ldots$ & $\ldots$ & $b_{i,P}$ \\ \hline
    ... & ... & ... & ... & ... & ... \\ \hline
    $V_K$  & $b_{K,1}$ & $b_{K,2}$ & $\ldots$ & $\ldots$ & $b_{K,P}$ \\ \hline
\end{tabular}&
  \begin{tabular}{|c|c|c|c|c|c|} \hline
       Itemset, &  &  &  &  &  \\
       Vehicle & $I_1, V_{i_1}$ & $I_2, V_{i_2}$ & ... & ... & $I_P, V_{i_P}$\\ \hline \hline
       Value of  &  & & & & \\
       lowest bid & $b_1^*$ & $b_2^*$ & $\ldots$ & $\ldots$ & $b_P^*$ \\ \hline
\end{tabular}\\
(a) & (b)\\
\end{tabular}
  \caption{ (a) Cost matrix where $(i,j)^{th}$ entry represents the cost of vehicle $V_i$ servicing the targets in itemset $I_j$ (b) Values of best (lowest) bid  for each of the itemsets (ties broken randomly). Here $V_{i_1}$ is the best (lowest) bidder for itemset $I_1$, the value of the lowest bid for $I_1 = b_1^*$ and so on}
  \label{bestBidsEachItemsetGeneral}
\end{table}

\noindent Computing the optimal route assignments via an Integer Linear Program would involve $K \times P$ variables. Considering only the best (lowest) bids for each set of targets (itemset) reduces the number of variables from $K \times P$ to P, where P the number of itemsets. This effectively collapses the matrix in table \ref{bestBidsEachItemsetGeneral}(a) to that in table \ref{bestBidsEachItemsetGeneral}(b). It can be shown that the Integer Linear Program (ILP) specified in (\ref{ILP_ReducedVariables}) is exactly equivalent to formulating an ILP with the same objective over the set of $K \times P$ variables (proof skipped due to space constraints). We assess two implementations of the combinatorial auction, an Integer Linear Program and a Maximum Independent Set formulation, described next. The two return identical solutions on almost all instances, but the Maximum Independent Set implementation runs significantly faster.

\begin{enumerate}
    \item {\bf Integer Linear Program Formulation:}
Let
\begin{itemize}
\item ${\bf x} = [x_1,\ldots,x_P]^T$ denote the set of (binary) variables to be solved for, where,
\[x_j = \left\{\begin{array}{cl}
          1 & \mbox{if itemset $I_j$ is selected as part of the winning bids}\\
          0 & \mbox{otherwise}
          \end{array}\right.\]
\item ${\bf b} = [b_1^*, b_2^*,\ldots,b_P^*]^T$ where $b_1^*, b_2^*,\ldots,b_P^*$ are the values of the best (lowest) bid  for each of the itemsets (ties broken randomly) as described in table \ref{bestBidsEachItemsetGeneral}(b).

\item Let $M$ be an item membership matrix of size $n \times P$ as specified in (\ref{eqn:itemMembershipMatrix}).
\end{itemize}
\begin{eqnarray}\label{eqn:itemMembershipMatrix}
    M = \begin{tabular}{c||c|c|c|c|c|}
       &  $I_1$ & $I_2$ & ... & ... & $I_P$ \\ \hline \hline
     $t_1$  &  &  & $\ldots$ & $\ldots$ &  \\ \hline
     ... & ... & ... & ... & ... & ... \\ \hline
    $t_n$  &  &  & $\ldots$ & $\ldots$ &  \\ \hline
\end{tabular} &  M(i,j)  =  \left\{ \begin{array}{l}
                        \mbox{1 if target $t_i$ is included in itemset $I_j$}\\
                        \mbox{0 otherwise}
                        \end{array} \right.  &
\end{eqnarray}

The Integer Linear Program that solves for ${\bf x}$ can be written as follows.
\begin{eqnarray}\label{ILP_ReducedVariables}
  \mbox{min } {\bf b}^T {\bf x} & &\\ \nonumber
   \mbox {s.t.}  &  M {\bf x} = {\bf 1} & x_i \in \{0,1\} ~~ i = 1,\ldots,P\\ \nonumber
  \end{eqnarray}
$M {\bf x} = {\bf 1}$ is a system of $n$ equalities, where the $i^{th}$ equation together with the constraint that all $x_j$s are binary ensures that exactly one of the itemsets containing target $t_i$ is selected.

\item {\bf Maximum Independent Set Implementation:}
Because each target must be visited exactly once, the selected itemsets must be pairwise disjoint: no two winning bids may share a target, $I_j \cap I_k = \emptyset, \hspace{1 mm} j \neq k$. The combinatorial auction for the MDCVRP is formulated as a cost-minimization problem, which can be interpreted as selecting a collection of feasible bundles with minimum total cost. In our implementation, we enumerate the feasible bundles, compute the best bid for each one, and then construct a solution greedily by scanning the bids in a fixed order and accepting only those that do not overlap with any bundle already selected. This initial solution is then refined through local search, where selected bundles are revisited one by one and replaced with cheaper compatible alternatives whenever doing so reduces the total cost while preserving feasibility.

\end{enumerate}

\subsection{\bf Matching-based Approach}
We propose a graph-matching approach for the MDCVRP, motivated by the matching-based lower bound previously introduced for the Traveling Tournament Problem (TTP), which schedules a double round-robin tournament in which each team plays every other once at home and once away~\cite{Diptendu2021}. The approach comes in two variants, \emph{Cluster-First} (Algorithm~\ref{alg:cluster_first}) and \emph{Match-First} (Algorithm~\ref{alg:match_first}). Both reduce routing to a minimum-weight matching on the complete graph whose vertices are the targets and whose edge weights are the inter-target distances; we compute it with a Blossom-style routine, LEMON's \textsc{MaxWeightedMatching} on shifted distance weights~\cite{Edmonds1965_Blossom}, which we denote \textsc{MinWeightMatching()}. On a complete graph this returns the minimum-weight (near-)perfect matching, pairing as many targets as possible and leaving at most one unpaired; on a feasibility-restricted graph it may leave more targets unpaired. Each matched pair becomes one tour that leaves a depot, visits both targets, and returns; each unpaired target is served by a singleton round trip; and a tour visiting more than two targets is sequenced by enumerating its few orderings and keeping the cheapest.

\textbf{Cluster-First} partitions the targets before matching: each target is assigned to its nearest depot, and every depot's cluster is solved independently as a single-depot instance, matching within the cluster and routing the matched pairs and leftover singletons from that depot. When a cluster has an odd number of targets the matching leaves one target unpaired, served as a singleton; equivalently, a zero-cost dummy node at the depot keeps the matching perfect and optimal (\cref{lem:dummy}).

\textbf{Match-First} defers the depot assignment: it runs \textsc{MinWeightMatching()} once over all targets, without reference to the depots, and only then routes each group (a matched pair or a leftover singleton) from the depot nearest its centroid. Matching globally lets it form pairings that Cluster-First, confined to within-cluster pairs, cannot.

When the tour size is not restricted to two, the single matching is replaced by the iterative \textsc{MatchAndMerge} procedure (Algorithm~\ref{alg:match_and_merge}). After the first matching, each pair is collapsed to a \emph{pair-node} at its midpoint and a second matching is run on the pair-nodes; two matched pair-nodes are merged into one tour of four targets when their combined demand fits the capacity, and otherwise into the largest feasible three-target tour, with the fourth target returned to the pool, or left as the two original pairs if no three of them fit. The procedure iterates on the still-unmerged targets until no further merge is possible, and any target left over is served as a singleton. Cluster-First runs \textsc{MatchAndMerge} within each cluster; Match-First runs it once over all targets and routes each resulting group from the depot nearest its centroid.

\begin{algorithm}[H]
\small
\caption{\textsc{MatchAndMerge}: Iterative Clustering for Scenario 2.}
\label{alg:match_and_merge}
\begin{algorithmic}[1]
\STATE \textbf{Input:} Target set $\mathcal{U}$, demands $w_i$, distance $d(\cdot,\cdot)$, capacity $C$
\STATE $\mathcal{L}\leftarrow\mathcal{U}$, $\mathcal{G}\leftarrow\emptyset$
\WHILE{$|\mathcal{L}|\ge 2$}
    \STATE Build complete graph $H_1$ on $\mathcal{L}$
    \STATE $M_1\leftarrow\textsc{MinWeightMatching}(H_1)$
    \STATE $\mathcal{L}'\leftarrow$ targets unmatched by $M_1$; replace each $(i,j)\in M_1$ by a midpoint pair-node
    \IF{fewer than two pair-nodes exist}
        \STATE Add all pair-node member sets to $\mathcal{G}$; $\mathcal{L}\leftarrow\mathcal{L}'$; \textbf{continue}
    \ENDIF
    \STATE Build complete graph $H_2$ on pair-node midpoints
    \STATE $M_2\leftarrow\textsc{MinWeightMatching}(H_2)$
    \FOR{each matched pair of pair-nodes $\{(a,b),(c,d)\}\in M_2$}
        \IF{$w_a+w_b+w_c+w_d\le C$}
            \STATE Add $\{a,b,c,d\}$ to $\mathcal{G}$
        \ELSE
            \STATE Let $(G^\star,\ell^\star)$ be the feasible 3-target merge with maximum demand
            \STATE If it exists, add $G^\star$ to $\mathcal{G}$ and $\ell^\star$ to $\mathcal{L}'$; otherwise add $\{a,b\}$ and $\{c,d\}$
        \ENDIF
    \ENDFOR
    \STATE Add pair-node member sets unmatched by $M_2$ to $\mathcal{G}$; $\mathcal{L}\leftarrow\mathcal{L}'$
\ENDWHILE
\STATE Add each remaining target in $\mathcal{L}$ as a singleton in $\mathcal{G}$
\RETURN $\mathcal{G}$
\end{algorithmic}
\end{algorithm}

\begin{algorithm}[H]
\small
\caption{Cluster-First Matching.}
\label{alg:cluster_first}
\begin{algorithmic}[1]
\STATE \textbf{Input:} Depots $\mathcal{D}$, targets $\mathcal{T}$, demands $w_i$, distance $d(\cdot,\cdot)$
\STATE \textbf{Parameters:} Capacity $C$, scenario $\mathcal{S}\in\{\texttt{scenario1},\texttt{scenario2},\texttt{scenario3}\}$
\STATE $\mathcal{R}\leftarrow\emptyset$
\STATE Assign each target $i\in\mathcal{T}$ to $a(i)=\arg\min_{p\in\mathcal{D}}d(p,i)$
\FOR{each depot $p\in\mathcal{D}$}
    \STATE $\mathcal{T}_p\leftarrow\{i\in\mathcal{T}:a(i)=p\}$
    \IF{$\mathcal{S}=\texttt{scenario2}$}
        \STATE $\mathcal{G}_p\leftarrow\textsc{MatchAndMerge}(\mathcal{T}_p,w,d,C)$
        \FOR{each group $G\in\mathcal{G}_p$}
            \STATE Order $G$ by starting at the target nearest to $p$ and minimizing the remaining order
            \STATE Add route $(p\rightarrow G\rightarrow p)$ to $\mathcal{R}$
        \ENDFOR
    \ELSE
        \STATE Build complete graph $H_p$ on $\mathcal{T}_p$ with edge weights $d(i,j)$
        \IF{$\mathcal{S}=\texttt{scenario3}$}
            \STATE Remove edges $(i,j)$ with $w_i+w_j>C$
        \ENDIF
        \IF{$|\mathcal{T}_p|$ is odd}
            \STATE Add a dummy node $\bar\delta$ at $p$ with $d(p,\bar\delta)=0$, $d(k,\bar\delta)=d(p,k)$ \COMMENT{see \cref{lem:dummy}}
        \ENDIF
        \STATE $M_p\leftarrow\textsc{MinWeightMatching}(H_p)$
        \STATE Add route $(p\rightarrow i\rightarrow j\rightarrow p)$ for each real pair $(i,j)\in M_p$
        \STATE Add route $(p\rightarrow i\rightarrow p)$ for each $i$ matched to $\bar\delta$ or left unmatched
    \ENDIF
\ENDFOR
\RETURN $\mathcal{R}$
\end{algorithmic}
\end{algorithm}

\begin{algorithm}[H]
\small
\caption{Match-First Routing.}
\label{alg:match_first}
\begin{algorithmic}[1]
\STATE \textbf{Input:} Depots $\mathcal{D}$, targets $\mathcal{T}$, demands $w_i$, distance $d(\cdot,\cdot)$
\STATE \textbf{Parameters:} Capacity $C$, scenario $\mathcal{S}\in\{\texttt{scenario1},\texttt{scenario2},\texttt{scenario3}\}$
\STATE $\mathcal{R}\leftarrow\emptyset$
\IF{$\mathcal{S}=\texttt{scenario2}$}
    \STATE $\mathcal{G}\leftarrow\textsc{MatchAndMerge}(\mathcal{T},w,d,C)$
\ELSE
    \STATE Build complete graph $H$ on $\mathcal{T}$ with edge weights $d(i,j)$
    \IF{$\mathcal{S}=\texttt{scenario3}$}
        \STATE Remove edges $(i,j)$ with $w_i+w_j>C$
        \STATE Remove edges $(i,j)$ with saving $g_{ij}\le 0$ \COMMENT{see \cref{prop:mf-approx}}
    \ENDIF
    \STATE $M\leftarrow\textsc{MinWeightMatching}(H)$
    \STATE $\mathcal{G}\leftarrow$ matched pairs in $M$ plus unmatched targets as singletons
\ENDIF
\FOR{each group $G\in\mathcal{G}$}
    \STATE Let $\bar{x}_G$ be the centroid of $G$
    \STATE $p_G\leftarrow\arg\min_{p\in\mathcal{D}}d(p,\bar{x}_G)$
    \IF{$\mathcal{S}=\texttt{scenario2}$}
        \STATE Order $G$ by starting at the target nearest to $p_G$ and minimizing the remaining order
    \ENDIF
    \STATE Add route $(p_G\rightarrow G\rightarrow p_G)$ to $\mathcal{R}$
\ENDFOR
\RETURN $\mathcal{R}$
\end{algorithmic}
\end{algorithm}

\noindent We evaluate both variants under three scenarios, each set by a bound on the target demands $w_i$ relative to the common vehicle capacity $C$ and by a bound on the tour size (the number of targets served in a single tour). Depots may dispatch multiple tours; the analysis uses exact metric distances, while the implementation rounds Euclidean distances to two decimals before scaling them to integers.
    \begin{itemize}
        \item \textbf{Scenario 1} ($w_i \leq C/2$, tour size $\leq 2$): every pair is capacity-feasible, so \textsc{MinWeightMatching()} runs on the full complete graph.
        \item \textbf{Scenario 2} ($w_i \leq C/2$, tour size unbounded): tours are built with \textsc{MatchAndMerge}, yielding groups of up to four targets.
        \item \textbf{Scenario 3} ($w_i \leq C$, tour size $\leq 2$): a pair may now exceed the capacity, so every edge $(i,j)$ with $w_i+w_j>C$ is removed before matching. For Match-First we additionally remove each edge for which serving the two targets as singletons from their nearest depots costs no more than routing them together from the depot nearest their centroid.
    \end{itemize}
For a like-for-like comparison, in each scenario the combinatorial auction is restricted to bundles no larger than the largest tour the matching method can form: singletons and pairs in Scenarios~1 and~3, and up to four targets in Scenario~2. As shown in \cref{app:approx}, the tour-size-$\le2$ regimes (Scenarios~1 and~3) are solvable \emph{exactly} in polynomial time, and Cluster-First and Match-First are fast $2$-approximations of that optimum.

\section{Theoretical Analysis}
We summarise the analysis of the matching methods here; full statements and proofs are in Appendices~\ref{app:complexity} and~\ref{app:approx}. Distances are metric, $n$ is the number of targets, and $K$ the number of depots.
\begin{itemize}
    \item \textbf{Running time.} Each routing step reduces to a single Blossom matching. Match-First runs one global matching over all $n$ targets and then assigns each resulting group to a depot, costing $O(n^3+nK)$ in Scenarios~1 and~2. Scenario~3 adds a \emph{best-depot precomputation}, since its savings filter scores each of the $\Theta(n^2)$ candidate pairs against all $K$ depots, raising the cost to $O(n^3+n^2K)$ (\cref{thm:mf-time}). Cluster-First avoids that precomputation: it assigns each target, rather than each pair, to its nearest depot and matches only within clusters, for $O(nK+\sum_p n_p^3)$, where $n_p$ is the size of depot $p$'s cluster (\cref{thm:cf-time}). Because $\sum_p n_p^3\ll n^3$ once targets spread across depots, Cluster-First is the faster of the two, and both are far cheaper than the auction, whose bundle scoring costs $O(Kn^s)$ and grows steeply with the largest bundle size $s$.

    \item \textbf{Exact optimum for tour size two.} The cost of a pairing splits into a fixed round-trip cost minus the distance the pairing saves, so minimising cost is the same as maximising total saving, a matching problem. With a single depot this makes both methods optimal (\cref{thm:single-opt}). More strikingly, for \emph{any} number of depots the tour-size-$\le2$ problem (Scenarios~1 and~3) is solvable exactly in $O(n^2K+n^3)$ time, by scoring each pair from its best depot and taking a maximum-saving matching (\cref{thm:md-exact}). Tour size two is therefore a tractability boundary for the otherwise NP-hard MDCVRP. The combinatorial auction, in its exact ILP form, computes this same optimum, which is why it is the strongest quality baseline (\cref{sec:prelim}).

    \item \textbf{Approximation guarantees.} Let $\Delta$ be the sum of nearest-depot distances; it lower-bounds the cost of any routing (\cref{thm:md-lb}). Cluster-First costs at most $2\Delta$ and is a $2$-approximation, with the factor $2$ tight because a pair whose two targets are nearest to different depots is split across them (\cref{thm:cf-approx}). Match-First matches globally, so without a safeguard it can pair targets that are naturally served from different depots and be arbitrarily far from optimal (\cref{rem:mf-unbounded}); restricting the matching to beneficial, positive-saving pairs, the Scenario~3 filter, restores a $2$-approximation (\cref{prop:mf-approx}). For the four-target tours of Scenario~2 every merge only lowers cost, and Cluster-First is a $4$-approximation, a factor that is tight for nearest-depot clustering (\cref{thm:cf-s2}).

    \item \textbf{Why we use the approximations.} We want a routing that is near-optimal yet extremely fast. The exact method is more expensive even asymptotically, since it pays the $O(n^2K)$ best-depot precomputation and a global $O(n^3)$ matching, an order of magnitude or more of extra work for a quality gain the experiments show is small (\cref{sec:results}), and it must be rebuilt from scratch whenever a target is inserted, whereas Cluster-First re-solves only the affected cluster. Cluster-First and Match-First are therefore a deliberate tradeoff between accuracy and speed (\cref{rem:md-exact}).
\end{itemize}
These guarantees agree with the experiments of \cref{sec:results}: the matching methods lose little in tour quality while running far faster and scaling better than the auction.

\section{Results and Experimental Details}\label{sec:results}
\label{sec:exp-setup}
We run all experiments on a Mac M4 Pro machine with 48\,GB of RAM, using a C++ implementation of the methods above. We generate instances at random: we sample target and depot locations on a 2D Cartesian grid, draw customer demands at random, and take straight-line (Euclidean) distances between locations. We assume a homogeneous fleet, so all vehicles share a common capacity, and we reuse the same list of capacities across datasets to keep results comparable. We use two families of instances: a \emph{base} collection of 100 datasets, each with 100 targets and 10 depots (\cref{sec:prelim,sec:rq3-base,sec:rq4-capacity}), and a larger \emph{scalability} collection of 50 datasets, each with 1000 targets and 20 depots (\cref{sec:rq5-scalability}). Within each dataset we fix the number of depots and targets and vary only the target locations and demands.

\noindent Having selected the Combinatorial Auction as the strongest baseline in \cref{sec:prelim}, the remaining experiments compare it against our graph matching based methods through three questions.
\begin{itemize}
    \item \textbf{RQ1}: How do the graph matching based methods compare to the best baseline in terms of solution quality and computation time?
    \item \textbf{RQ2}: Does the computation-time advantage of the graph matching based methods persist across different vehicle capacities?
    \item \textbf{RQ3}: How well do the graph matching based methods scale to larger problem instances?
\end{itemize}

\subsection{RQ1: Graph Matching versus Combinatorial Auction}\label{sec:rq3-base}
On the base datasets (\cref{sec:exp-setup}) we compare the two matching methods (Cluster-First, Match-First) against the two combinatorial-auction implementations (ILP and Maximum Independent Set) across the three scenarios of \cref{sec:combinatorialAuction}. Figure~\ref{fig:base_boxplots} shows the resulting tour-distance and computation-time distributions. The auction attains slightly shorter tours, but the matching methods run far faster. In Scenario~3, for example, CA-ILP gives the shortest mean tour ($5305$) yet takes $479$\,ms per instance, whereas Match-First and Cluster-First stay within $2$--$6\%$ of it ($5411$ and $5626$) at $4.8$ and $3.3$\,ms, about two orders of magnitude faster. The same pattern holds in Scenarios~1 and~2.

\subsection{RQ2: Effect of Capacity on Computation Time}\label{sec:rq4-capacity}
Figure~\ref{fig:capacity_plots} shows how computation time changes as vehicle capacity is varied for the two matching-based and two CA-based implementations, on the same base datasets. The computation-time advantage of the matching-based methods observed in RQ1 persists over a wide range of vehicle capacities.

\begin{figure}[H]
\centering
\includegraphics[width=0.95\textwidth]{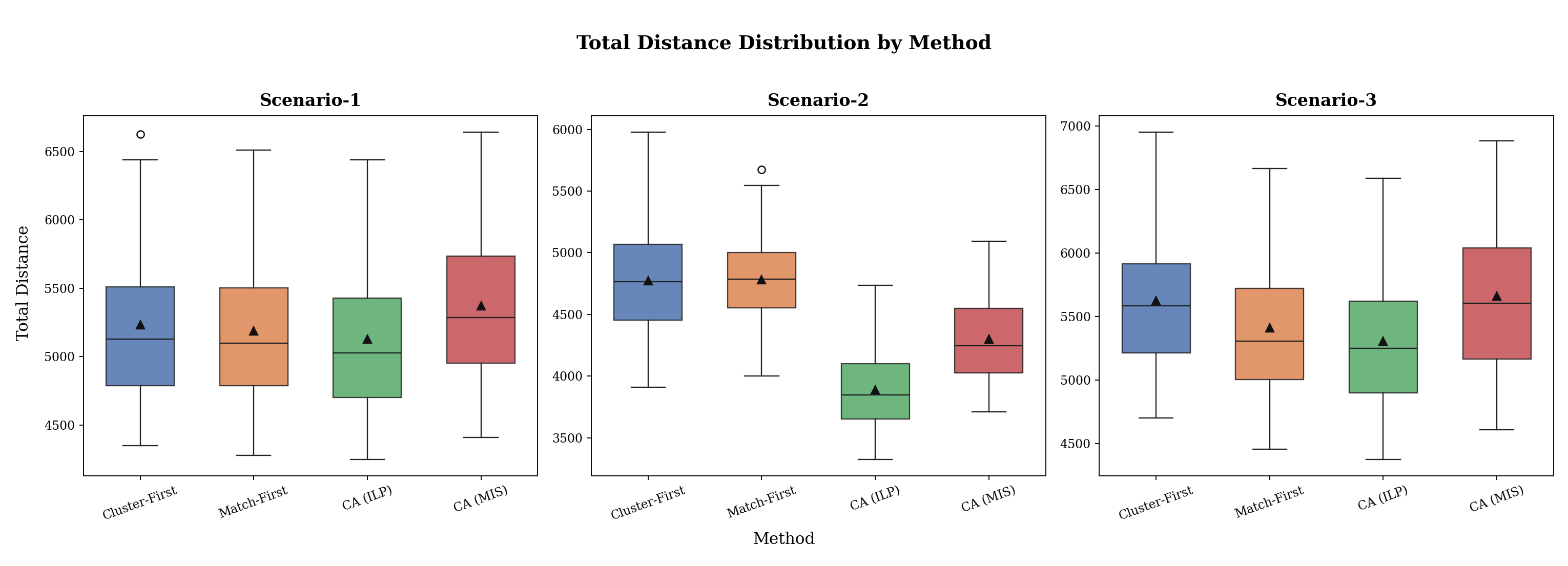}\\[0.4em]
\includegraphics[width=0.95\textwidth]{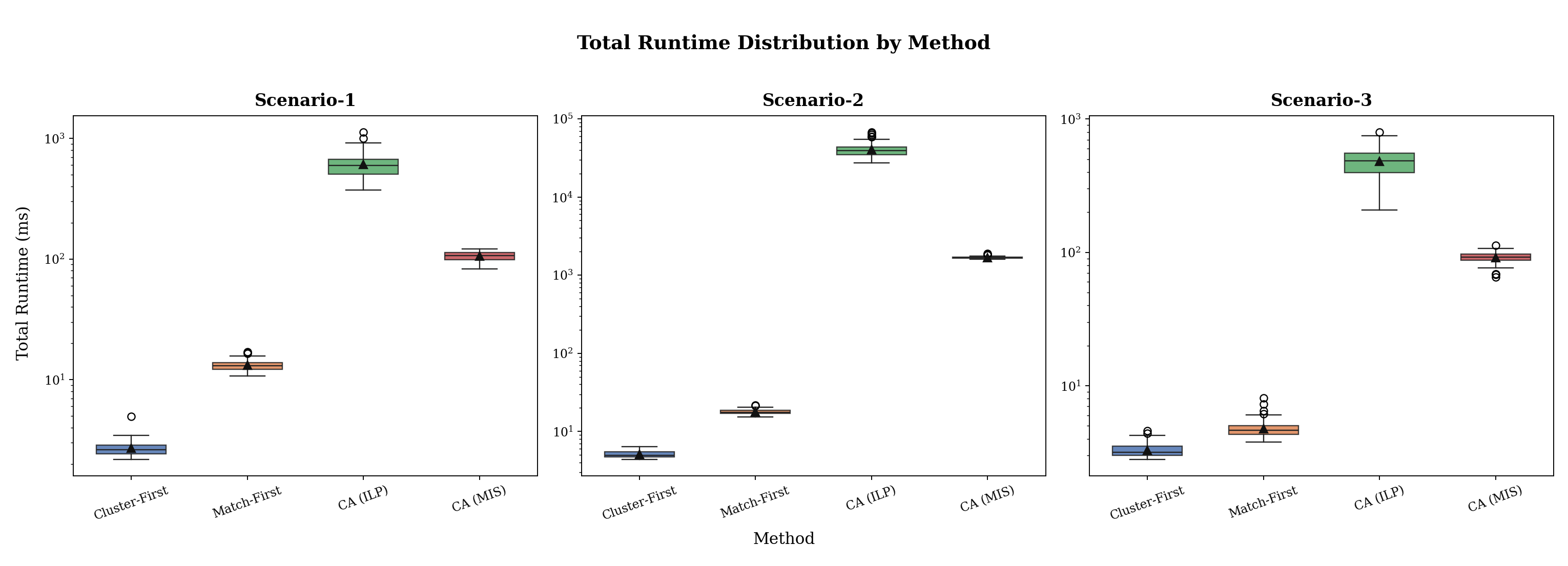}
\caption{Tour distance (top) and computation time (bottom) over the 100 base instances (100 targets, 10 depots), for Cluster-First, Match-First, CA-ILP, and CA-MIS.}
\label{fig:base_boxplots}
\end{figure}

\begin{figure}[H]
\centering
\includegraphics[width=0.95\textwidth]{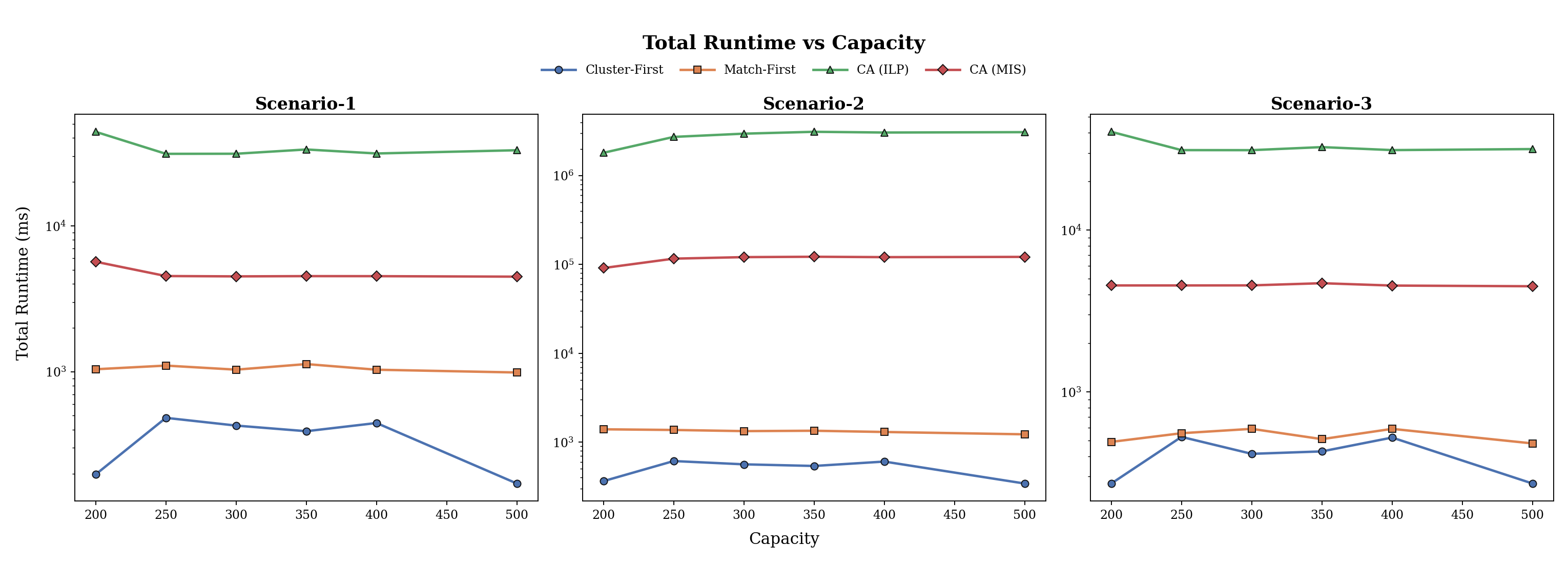}
\caption{Computation time versus vehicle capacity, for Cluster-First, Match-First, CA-ILP, and CA-MIS on the base instances.}
\label{fig:capacity_plots}
\end{figure}

\subsection{RQ3: Scalability}\label{sec:rq5-scalability}
On the scalability datasets (\cref{sec:exp-setup}; 50 datasets, 1000 targets, 20 depots) we again compare only the graph matching based and Combinatorial Auction based approaches. For the scalability plots we report only the MIS-based CA implementation, because the ILP-based implementation does not complete at this problem size. We also omit Scenario~2 from these plots: in this scenario the CA formulation must consider bundles of up to four targets, and with 1000 targets the number of triples alone is $\binom{1000}{3}\approx 1.66\times 10^8$, before the pruned four-target candidates are considered. The resulting scalability distributions are shown in Figure~\ref{fig:scalability_boxplots}.

\noindent At this scale the matching methods are no longer behind on quality. In Scenario~1 the mean tour length is $32{,}617$ for Cluster-First and $32{,}513$ for Match-First, against $33{,}137$ for CA-MIS; in Scenario~3 it is $33{,}215$ and $32{,}979$, against $34{,}360$. The matching methods thus match or slightly beat the auction. The computation-time gap, meanwhile, is large: Cluster-First averages $69$--$83$\,ms per instance and Match-First about $1.1$\,s, whereas CA-MIS takes $31$--$37$\,s, $2$--$3$ orders of magnitude slower, and CA-ILP does not finish at all. Matching therefore scales far better than the auction as the instance grows.

\begin{figure}[H]
\centering
\includegraphics[width=0.90\textwidth]{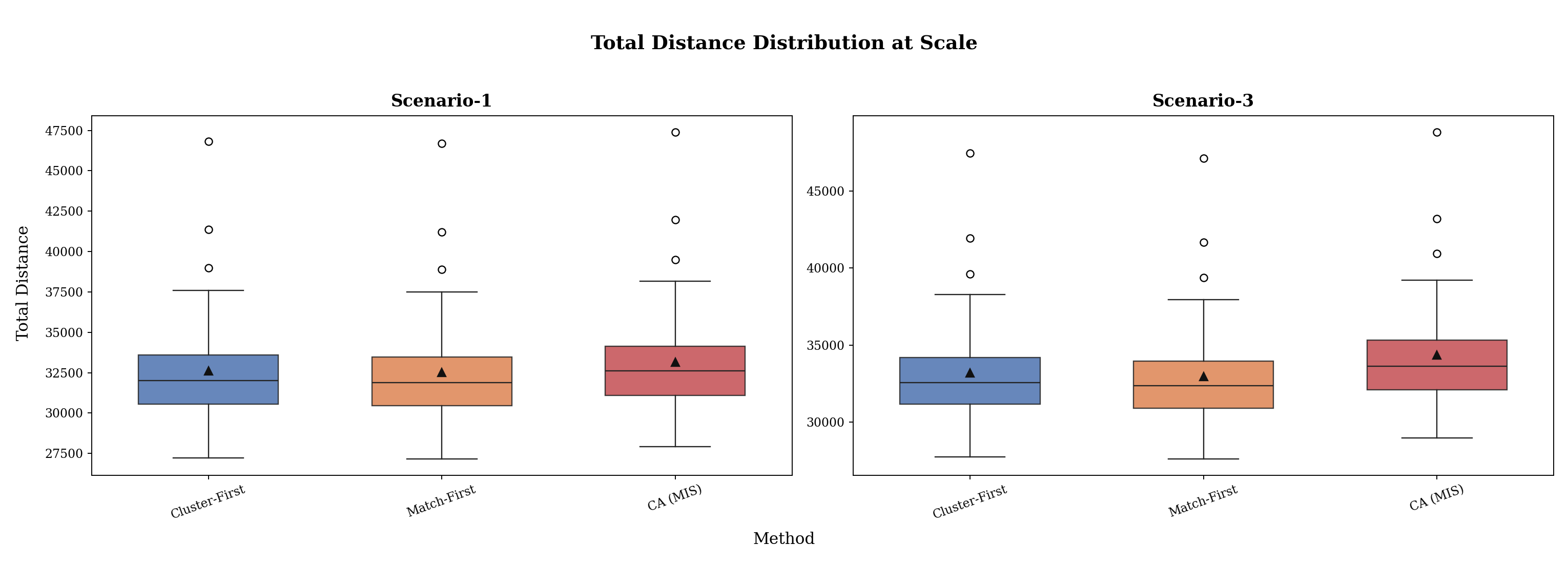}\\[0.4em]
\includegraphics[width=0.90\textwidth]{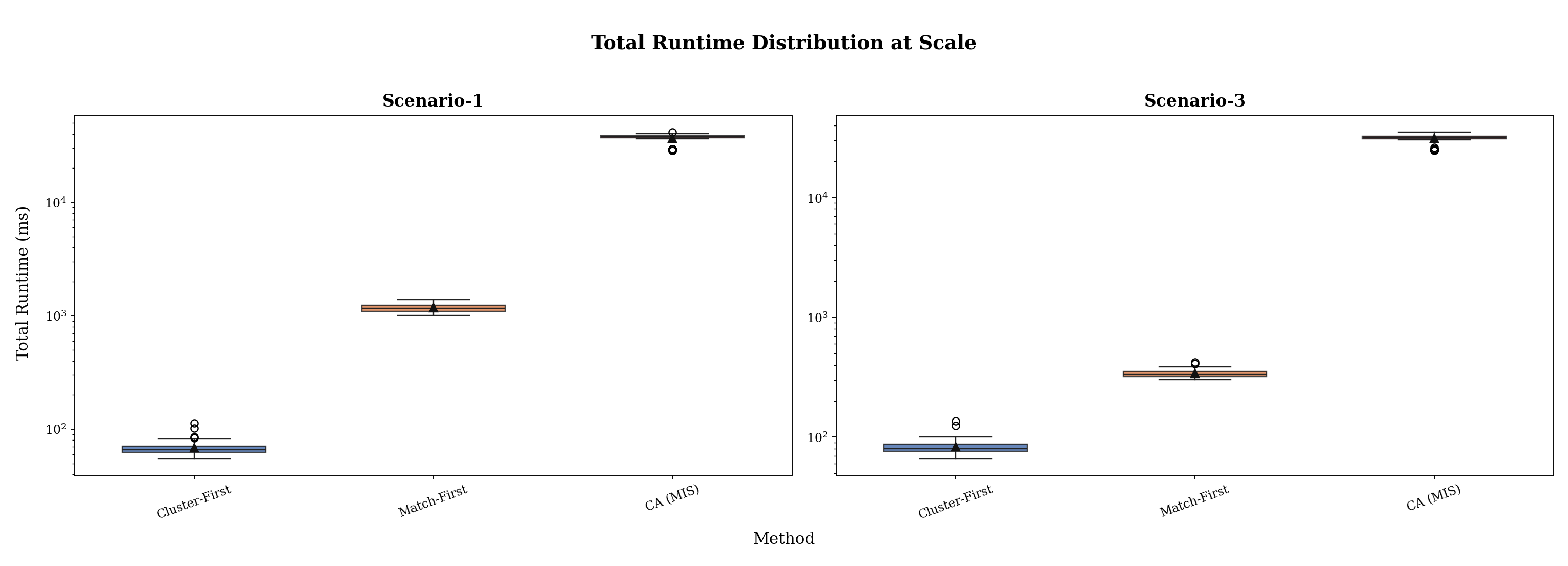}
\caption{Tour distance (top) and computation time (bottom) over the scalability instances (1000 targets, 20 depots), for Cluster-First, Match-First, and CA-MIS in Scenarios~1 and~3. CA-ILP and Scenario~2 are omitted (\cref{sec:rq5-scalability}).}
\label{fig:scalability_boxplots}
\end{figure}

\section{Conclusions and Future Work}
We study the Multi-Depot Capacitated Vehicle Routing Problem through the lens of graph matching and introduce two algorithms, Cluster-First and Match-First, that reduce routing to a sequence of minimum-weight matchings. The central finding is that they are near-optimal yet extremely fast. On the theory side, for tours of up to two targets the matching formulation solves the MDCVRP exactly in polynomial time for any number of depots (\cref{thm:md-exact}); both algorithms are $2$-approximations in the structured regimes, with the factor of two tight (\cref{thm:cf-approx,prop:mf-approx}); and this matching optimum coincides with the exact combinatorial-auction optimum, so the auction we benchmark against is a strong quality baseline. Empirically, the two methods track that baseline in tour length and pull far ahead in speed. On instances of $1000$ customers and $20$ depots they match or slightly beat CA-MIS in tour length, roughly $2\%$ shorter in Scenario~1 and $3$--$4\%$ shorter in Scenario~3, while running two to three orders of magnitude faster, at a scale where the ILP auction does not finish (\cref{sec:rq3-base,sec:rq5-scalability}). Because Cluster-First routes each depot independently, and the nearest-depot assignment avoids the global best-depot precomputation that an exact solve needs, the approach also suits the dynamic setting that motivates it: when a customer arrives, we re-solve only the affected cluster.

\noindent The approach leaves clear room to grow, and each direction is a natural piece of future work. Our exact and $2$-approximation guarantees currently cover tours of up to two targets (Scenarios~1 and~3); for the four-target tours of Scenario~2 we prove single-depot dominance of the tour-size-two optimum and a $4$-approximation (\cref{thm:scenario2,thm:cf-s2}), so extending the guarantees to larger tour sizes, and closing the implementation-versus-theory gap in Scenarios~2 and~3 (\cref{rem:s3-gap}), is a clear next step. Richer models such as heterogeneous fleets and time windows are also within reach. Empirically, we plan to validate on established MDVRP benchmark libraries with repeated trials and confidence intervals, extending the randomly generated Euclidean instances used here. More broadly, because matching-based routing is cheap to run and to update, it is an attractive inner loop for auction- and search-based solvers and a natural building block for fully dynamic vehicle-routing systems.

	\bibliography{references,MDVRPReferences}
    \section*{Appendix}
	\appendix

\section{Detailed Time-Complexity Analysis}\label{app:complexity}

Throughout this appendix we write $n=|\mathcal{T}|$ for the number of targets and $K=|\mathcal{D}|$ for the number of depots. All routing subroutines reduce to computing minimum-weight matchings, for which we use Edmonds' Blossom algorithm~\cite{Edmonds1965_Blossom}.

\begin{lemma}\label{lem:blocks}
On $m$ vertices, $\textsc{MinWeightMatching}$ (Blossom) runs in $O(m^3)$ time. Building the complete distance graph and removing capacity-infeasible edges each cost $O(m^2)$. Ordering a group of bounded size $\le s_0$ costs $O(1)$, by enumerating the $O(s_0!)$ tours over the group and its depot.
\end{lemma}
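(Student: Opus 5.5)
The lemma has three independent claims, and I will prove each one directly by citing or counting.

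\emph{Blossom runs in $O(m^3)$.} I will not re-derive this from scratch. I will cite the standard analysis of Edmonds' weighted matching algorithm~\cite{Edmonds1965_Blossom} in its $O(m^3)$ implementation, due to Gabow and Lawler. That implementation runs $O(m)$ augmentation stages. Each stage performs $O(m)$ dual adjustments and blossom shrink or expand operations, and each of these costs $O(m)$ with the usual slack bookkeeping.

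One point needs care: the paper's \textsc{MinWeightMatching()} is LEMON's \textsc{MaxWeightedMatching} applied to shifted weights. So I will add a short argument that the shift is a linear-time, cardinality-preserving transformation. Replacing $d(i,j)$ by $B-d(i,j)$, with $B$ larger than every distance, means any maximum-weight matching is also of maximum cardinality on the complete graph. Among maximum-cardinality matchings it then minimises total distance. Computing the shifted weights costs $O(m^2)$, so the $O(m^3)$ bound is unchanged. On a feasibility-restricted graph the same cubic bound holds, since the Blossom analysis is stated in terms of vertices.

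\emph{Building and filtering the graph costs $O(m^2)$.} This is a counting argument. The complete graph has $\binom{m}{2}=O(m^2)$ edges. Each edge weight is a single distance evaluation, or a table lookup under the general graph-input model, so it costs $O(1)$. Capacity filtering tests $w_i+w_j>C$ once per edge, which is again $O(1)$ per edge, for $O(m^2)$ in total.

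\emph{Ordering a bounded group costs $O(1)$.} For a group $G$ with $|G|\le s_0$ and a fixed depot $p$, every candidate tour is a permutation of $G$ bracketed by $p$. There are $|G|!\le s_0!$ such tours. Evaluating one tour sums $|G|+1\le s_0+1$ distances. The total cost is therefore $O(s_0!\,(s_0+1))$, which is a constant because $s_0$ is a fixed constant ($s_0=4$ in Scenario~2). The heuristic rule in the algorithms, which starts at the target nearest to $p$, is also covered, because it enumerates only a subset of these orderings.

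The only step with any subtlety is the first: making sure that invoking a max-weight routine on shifted weights really gives a minimum-weight (near-)perfect matching within the same asymptotic bound. The other two claims are immediate counting.
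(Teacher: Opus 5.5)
Your proof is correct and follows the paper's proof: cite the standard cubic bound for Edmonds' algorithm (Gabow/Lawler), count the $O(m^2)$ edges at $O(1)$ work each for both construction and the capacity scan, and observe that a group of size at most $s_0=4$ has only a constant number of orderings. Your shifted-weight argument is also correct on complete graphs, since strictly positive weights force a maximal matching, which there is near-perfect; but it establishes correctness rather than running time, and, like the paper, you are in effect bounding the Gabow--Lawler implementation, not LEMON's own advertised $O(nm\log n)$ guarantee.
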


\begin{proof}
The cubic Blossom bound is standard~\cite{Edmonds1965_Blossom,Gabow1976}. A complete graph on $m$ points has $O(m^2)$ edges, each weighted by one distance evaluation, so construction and the infeasibility scan are $O(m^2)$. Every group in Scenarios~1--3 has size at most $s_0=4$, so a constant number of orderings suffices.
\end{proof}

\begin{theorem}[Match-First complexity]\label{thm:mf-time}
Algorithm~\ref{alg:match_first} (Match-First) runs in $O(n^3+nK)$ time in Scenarios~1 and~2, and in $O(n^3+n^2K)$ time in Scenario~3.
\end{theorem}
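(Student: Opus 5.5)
The plan is to charge every line of Algorithm~\ref{alg:match_first} to one of the primitives of \cref{lem:blocks}, handling the grouping phase and the routing phase separately. In Scenario~1 the grouping phase builds the complete graph $H$ on the $n$ targets in $O(n^2)$ time. It then makes one call to \textsc{MinWeightMatching} at cost $O(n^3)$. Reading off the matched pairs and the unmatched singletons takes $O(n)$, which gives at most $n$ groups.

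For the routing phase, each group has constant size. Its centroid is therefore computed in $O(1)$, and choosing $p_G$ scans all $K$ depots in $O(K)$. Appending the route costs $O(1)$, and in Scenario~2 the route ordering is also $O(1)$ by \cref{lem:blocks}, since groups have size at most $4$. Summing over at most $n$ groups gives $O(nK)$. Scenario~1 therefore costs $O(n^3+nK)$.

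Scenario~2 replaces the single matching with \textsc{MatchAndMerge}, and this is the step I expect to be the main obstacle. A naive bound multiplies the number of while-loop iterations by a cubic matching cost, which could give $O(n^4)$. I would argue as follows.
\begin{itemize}
  \item \textbf{First matching.} In an iteration with $L=|\mathcal{L}|$ live targets, $M_1$ costs $O(L^3)$.
  \item \textbf{Second matching.} $M_2$ runs on at most $L/2$ pair-nodes, so it costs $O(L^3)$.
  \item \textbf{Merge step.} The per-pair merge and three-target checks are $O(1)$ each, for $O(L)$ in total.
  \item \textbf{Shrinkage of the live set.} The only targets returned to $\mathcal{L}'$ are the single leftovers from unmatched targets, at most one per matching, plus at most one target per pair of pair-nodes whose merge produced a three-target tour. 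Each such pair of pair-nodes contains four targets, so at most a quarter of them are returned. Hence $|\mathcal{L}'|\le L/4+O(1)$.
  \item \textbf{Degenerate branch.} When fewer than two pair-nodes exist, $\mathcal{L}'$ contains only the unmatched targets, of which there are $O(1)$ on a complete graph.
\end{itemize}
The live set thus shrinks geometrically, and the cubic costs form a geometric series $\sum_k (n/4^k)^3=O(n^3)$. The $O(1)$ additive slack needs care near the end: once $L$ is constant, each remaining iteration costs $O(1)$. I would also check that an iteration cannot stall, which would happen if $|\mathcal{L}'|=|\mathcal{L}|$. This cannot occur, because $L\ge 2$ forces at least one pair, so at least one target leaves. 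With this, Scenario~2 also costs $O(n^3+nK)$.

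Scenario~3 keeps the single matching but adds two filters. The capacity filter costs $O(n^2)$ by \cref{lem:blocks}. The savings filter must decide, for each of the $\Theta(n^2)$ candidate edges $(i,j)$, whether $g_{ij}>0$. This compares the singleton cost $2d(i,a(i))+2d(j,a(j))$ with the paired cost from the depot nearest the centroid of $\{i,j\}$.
\begin{itemize}
  \item The nearest-depot distances of all targets are precomputed in $O(nK)$.
  \item Each pair needs an $O(K)$ scan for the depot nearest its midpoint, which gives the best-depot precomputation cost of $O(n^2K)$.
\end{itemize}
The matching and the routing phase are then as in Scenario~1. This totals $O(n^2K+n^3+nK)=O(n^3+n^2K)$, completing the statement.
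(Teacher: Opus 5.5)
Your proposal is correct and follows the paper's proof essentially step for step. It has the same three parts: one $O(n^3)$ matching plus an $O(K)$ depot scan per group in Scenario~1; an extra $O(n^2K)$ best-depot scan for the savings filter in Scenario~3; and, for Scenario~2, the shrinkage bound $|\mathcal{L}'|\le 1+\lfloor \ell/4\rfloor$ feeding a geometric series of cubic matching costs (your explicit treatment of the additive slack and of termination is slightly more careful than the paper's sum $\sum_k O((n/4^k)^3)$, which drops the $+1$).
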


\begin{proof}
\emph{Scenario 1.} The algorithm builds one complete graph $H$ on all $n$ targets ($O(n^2)$ by \cref{lem:blocks}) and runs a single $\textsc{MinWeightMatching}$ on $H$ ($O(n^3)$). It then forms at most $n/2$ pairs and at most $n$ singleton groups; for each it computes a centroid ($O(1)$) and scans the $K$ depots for the nearest ($O(K)$), totalling $O(nK)$. Summing gives $O(n^3+nK)$.

\emph{Scenario 3.} We also delete edges $(i,j)$ with $w_i+w_j>C$ ($O(n^2)$) and run the savings filter, a \emph{best-depot precomputation} that scores each of the $\Theta(n^2)$ pairs against the $K$ depots at $O(n^2K)$. With the matching ($O(n^3)$) and depot assignment ($O(nK)$), the total is $O(n^3+n^2K)$. Scenarios~1--2 avoid the $O(n^2K)$ term by assigning depots per group, not per pair.

\emph{Scenario 2.} The work is dominated by $\textsc{MatchAndMerge}$ (Algorithm~\ref{alg:match_and_merge}). Each iteration of its while-loop at least quarters the number $\ell=|\mathcal{L}|$ of lone targets: the level-1 matching yields $\lfloor \ell/2\rfloor$ pair-nodes, and the level-2 matching pairs these into $\lfloor \ell/4\rfloor$ groups, each returning at most one leftover target (when a $3$-target group forms). Hence $|\mathcal{L}'|\le 1+\lfloor \ell/4\rfloor$. One iteration runs two matchings on $\le\ell$ vertices, costing $O(\ell^3)$, so the total is
\[
\sum_{k\ge 0} O\!\left(\Big(\tfrac{n}{4^{k}}\Big)^{3}\right)=O(n^3)\sum_{k\ge0}\frac{1}{64^{k}}=O(n^3),
\]
and there are $O(\log n)$ iterations. Depot assignment and bounded-size ordering of the resulting $O(n)$ groups add $O(nK)$ by \cref{lem:blocks}. The total is again $O(n^3+nK)$.
\end{proof}

\begin{theorem}[Cluster-First complexity]\label{thm:cf-time}
Let $n_p=|\mathcal{T}_p|$ be the number of targets assigned to depot $p$, so $\sum_{p\in\mathcal{D}}n_p=n$. Algorithm~\ref{alg:cluster_first} (Cluster-First) runs in
\[
O\!\left(nK+\textstyle\sum_{p\in\mathcal{D}}n_p^{3}\right)
\]
time. In the worst case (all targets in one cluster) this is $O(n^3+nK)$, and for balanced clusters with $n_p=\Theta(n/K)$ it is $O(nK+n^3/K^2)$.
\end{theorem}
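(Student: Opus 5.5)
The plan is to charge the work of Algorithm~\ref{alg:cluster_first} to its two phases, the global clustering step and the per-depot solves, using the primitive costs of \cref{lem:blocks}.

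\emph{Clustering.} Line~4 computes $a(i)=\arg\min_{p}d(p,i)$ for each of the $n$ targets. Each computation scans all $K$ depots at $O(1)$ per distance evaluation, so this costs $O(nK)$. Forming the sets $\mathcal{T}_p$ by bucketing the targets adds $O(n+K)$, which is absorbed.

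\emph{Per-depot work in Scenarios~1 and~3.} For depot $p$ we build $H_p$ on $n_p$ vertices, possibly plus one dummy node $\bar\delta$, at cost $O(n_p^2)$. In Scenario~3 we also delete infeasible edges at cost $O(n_p^2)$. We then run one Blossom matching on at most $n_p+1$ vertices, costing $O((n_p+1)^3)$. Finally we emit $O(n_p)$ routes in $O(1)$ time each.

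Summing over $p$ gives $\sum_p O((n_p+1)^3)=O\bigl(\sum_p n_p^3+K\bigr)$. The only care needed is depots with $n_p=0$: the loop should skip empty clusters, or else the $O(K)$ overhead is charged to the $nK$ term, assuming $n\ge 1$.

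\emph{Per-depot work in Scenario~2.} I will reuse the argument from the Scenario~2 case of \cref{thm:mf-time} verbatim, with $n$ replaced by $n_p$. Each while-iteration of \textsc{MatchAndMerge} shrinks $|\mathcal{L}|$ geometrically, from $\ell$ to at most $1+\lfloor \ell/4\rfloor$. Each iteration costs $O(\ell^3)$, so the total is a geometric series bounded by $O(n_p^3)$. Ordering each of the $O(n_p)$ groups of size at most $4$ costs $O(1)$ per group by \cref{lem:blocks}. Unlike Match-First, no depot scan is needed here, since the depot $p$ is fixed. The per-depot cost is therefore again $O(n_p^3)$, and adding the clustering phase yields the claimed $O(nK+\sum_p n_p^3)$.

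\emph{Special cases.} For the worst case I use $\sum_p n_p^3\le(\sum_p n_p)^3=n^3$, which holds because the $n_p$ are nonnegative; equality is attained when one cluster holds all the targets. For balanced clusters, $n_p=\Theta(n/K)$ gives $\sum_p n_p^3=K\cdot\Theta(n^3/K^3)=\Theta(n^3/K^2)$.

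I do not expect a real obstacle. The only delicate points are the $+1$ from the dummy node and the handling of empty clusters, both resolved above, together with making sure the Scenario~2 geometric-series bound from \cref{thm:mf-time} really applies per cluster. It does, because it depends only on the input size to \textsc{MatchAndMerge}.
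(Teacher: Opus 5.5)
Your proposal is correct and follows essentially the same route as the paper's proof. Both use an $O(nK)$ nearest-depot assignment, then an independent $O(n_p^3)$ single-depot solve per cluster, via \cref{lem:blocks} in Scenarios~1 and~3 and via the geometric-sum argument of \cref{thm:mf-time} in Scenario~2, followed by $\sum_p n_p^3\le n^3$ and $\sum_p (n/K)^3=n^3/K^2$ for the two special cases. Your extra bookkeeping for the dummy node's $+1$ and for empty clusters is sound, and it simply makes explicit what the paper leaves implicit.
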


\begin{proof}
Assigning each target to its nearest depot scans all $K$ depots, costing $O(nK)$. This assignment is \emph{per target}, so Cluster-First never scores the $\Theta(n^2)$ pairs against the depots and incurs no $O(n^2K)$ term; within a cluster the depot is fixed to $p$. Each depot $p$ is then an independent single-depot subinstance on $\mathcal{T}_p$, costing $O(n_p^3)$ by \cref{lem:blocks} (Scenarios~1 and~3) or by the geometric sum of \cref{thm:mf-time} (Scenario~2). Summing gives $\sum_p O(n_p^3)$; convexity of $x\mapsto x^3$ gives the worst case $\sum_p n_p^3\le n^3$ and the balanced case $\sum_p (n/K)^3=n^3/K^2$.
\end{proof}

\begin{remark}\label{rem:cf-vs-mf}
\Cref{thm:mf-time,thm:cf-time} explain the observed running-time gap. Match-First runs one global cubic matching ($\Theta(n^3)$) and, in Scenario~3, an $O(n^2K)$ best-depot scan; Cluster-First does neither, decomposing into per-depot matchings ($\sum_p n_p^3\ll n^3$) and assigning each target to its nearest depot in $O(nK)$. Match-First stays useful because its global matching can form pairings Cluster-First cannot (\cref{rem:mf-unbounded}). An auction enumerating bundles of size $\le s$ instead pays $O(Kn^s)$, which grows far faster in $s$.
\end{remark}

\section{Optimality and Approximation Analysis}\label{app:approx}

We now analyse solution quality. Distances form a metric satisfying the triangle inequality, the only property used below (the centroid rule of \cref{thm:mf-multi} also uses Euclidean midpoints). For tour size $\le2$ (Scenarios~1 and~3) we solve the problem exactly in polynomial time, for a single depot (\cref{thm:single-opt}) and for any number of depots (\cref{thm:md-exact}), and Cluster-First and Match-First are $2$-approximations of that optimum (\cref{thm:cf-approx,prop:mf-approx}). For Scenario~2 exactness is lost, but Cluster-First remains a $4$-approximation (\cref{thm:cf-s2}).

\paragraph*{Notation}
For a single depot $p$ let $D_p=\sum_{k\in\mathcal{T}}d(p,k)$ be the total depot-to-target distance. For a matching $M$ write $W(M)=\sum_{(i,j)\in M}d(i,j)$, and for even $n$ let $W_m=\min_M W(M)$ over perfect matchings. In the multi-depot case let $\delta_k=\min_{q\in\mathcal{D}}d(q,k)$ be the nearest-depot distance of target $k$, $a(k)=\arg\min_q d(q,k)$ its nearest depot, and $\Delta=\sum_{k\in\mathcal{T}}\delta_k$. We write $\mathrm{OPT}_s$ for the optimal cost over routings whose tours serve at most $s$ targets, and abbreviate $\mathrm{OPT}_2$.

\subsection{From TTP-2 to routing}

The analysis adapts the matching-based lower bound for {\sc TTP-2}~\cite{Diptendu2021}. Let $W_i$ be the total distance from team $i$ to the others, $W_t$ the total edge weight, and $W_m$ the weight of a minimum-weight matching. A team may visit its two matched opponents on one away tour, so its minimum travel is $W_i+W_m$; summing over teams gives the lower bound
\begin{equation}\label{eq:ttp-lb}
\sum_{i\in V}(W_i+W_m)=2W_t+nW_m .
\end{equation}
For {\sc TTP-2} this is only a lower bound, since the teams' schedules cannot all be realised at once. Routing has no such constraint, so the matching bound is \emph{achieved}. The mechanism is a \emph{savings} accounting: the cost of any pairing is a fixed round-trip cost minus the distance saved by combining targets into pairs.

\begin{lemma}[Savings identity]\label{lem:savings}
Fix a single depot $p$ and a tour-size-$\le2$ routing whose served pairs form a matching $M$ of $\mathcal{T}$, every unmatched target being served by its own round trip. Writing the \emph{saving} of a pair as $s_{ij}:=d(p,i)+d(p,j)-d(i,j)\ge0$,
\[
\mathrm{cost}(M)=2D_p-\sum_{(i,j)\in M}s_{ij}.
\]
If $M$ is a perfect matching this equals $D_p+W(M)$.
\end{lemma}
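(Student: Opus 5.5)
The plan is a direct cost computation: decompose the routing into its tours and sum their lengths. With a single depot $p$ and tour size at most two, every tour has one of two forms. It is either a singleton round trip $p\to k\to p$, of length $2d(p,k)$, or a pair tour $p\to i\to j\to p$, of length $d(p,i)+d(i,j)+d(j,p)$. The traversal direction of a two-target tour does not affect its length because $d$ is symmetric. So the cost of the routing depends only on the matching $M$ of served pairs, and $\mathrm{cost}(M)$ is well defined.

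\textbf{Main identity.} Write $U$ for the set of targets left unmatched by $M$. Then
\[
\mathrm{cost}(M)=\sum_{k\in U}2d(p,k)+\sum_{(i,j)\in M}\bigl(d(p,i)+d(p,j)+d(i,j)\bigr).
\]
Each pair term equals $2d(p,i)+2d(p,j)-s_{ij}$, by the definition of $s_{ij}$. The matched pairs together with $U$ partition $\mathcal{T}$, so every target $k$ contributes $2d(p,k)$ exactly once. Collecting these contributions gives $2D_p-\sum_{(i,j)\in M}s_{ij}$, which is the claimed identity.

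\textbf{Nonnegativity of the saving.} The claim $s_{ij}\ge 0$ is exactly the triangle inequality $d(i,j)\le d(i,p)+d(p,j)$.

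\textbf{Perfect-matching case.} If $M$ is perfect, then $U=\emptyset$. Summing the savings over $M$ gives
\[
\sum_{(i,j)\in M}s_{ij}=\sum_{k\in\mathcal{T}}d(p,k)-W(M)=D_p-W(M),
\]
because each target lies in exactly one pair. Substituting this into the identity gives $2D_p-D_p+W(M)=D_p+W(M)$.

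There is no real obstacle here. The only point needing care is the bookkeeping that each target's depot distance is counted exactly once, whether it is matched or a singleton, and this is guaranteed by $M$ together with $U$ partitioning $\mathcal{T}$.
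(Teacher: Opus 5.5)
Your proof is correct and follows essentially the same route as the paper: both charge each target $2d(p,k)$ exactly once, note that pairing $i,j$ lowers the total by exactly $s_{ij}\ge 0$ (by the triangle inequality), and obtain the perfect-matching case by substituting $\sum_{(i,j)\in M}\bigl(d(p,i)+d(p,j)\bigr)=D_p$. The only difference is presentation: the paper phrases the first step as replacing two singleton round trips by one joint tour, whereas you sum the tour lengths directly.
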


\begin{proof}
Serving each target by its own round trip costs $\sum_k 2d(p,k)=2D_p$. Replacing the two round trips of $i$ and $j$ by the joint tour $p\to i\to j\to p$ changes the cost by
\[
\big[d(p,i)+d(i,j)+d(j,p)\big]-\big[2d(p,i)+2d(p,j)\big]=-s_{ij},
\]
and $s_{ij}\ge0$ by the triangle inequality; summing over $M$ gives the identity. If $M$ is perfect then $\sum_{(i,j)\in M}\big(d(p,i)+d(p,j)\big)=D_p$, so substituting the definition of $s_{ij}$ gives $\mathrm{cost}(M)=D_p+W(M)$.
\end{proof}

\subsection{Single-depot analysis}

\subsubsection{Scenarios 1 and 3 (tour size at most two)}

\begin{theorem}[Single-depot optimality; approximation ratio $1$]\label{thm:single-opt}
For a single depot and tour size $\le2$, the routing that serves the pairs of a \emph{maximum-saving} matching (a maximum-weight matching under the weights $s_{ij}$, over the capacity-feasible pairs) and every other target as a singleton is optimal. In Scenario~1 (all pairs feasible, since $w_i\le C/2$) with even $n$, this optimum equals
\[
\mathrm{OPT}_2=D_p+W_m,
\]
attained by the minimum-weight perfect matching that \textsc{MinWeightMatching} computes. In Scenario~3 ($w_i\le C$) the same maximum-saving matching, taken over the graph with the infeasible edges ($w_i+w_j>C$) removed, is optimal.
\end{theorem}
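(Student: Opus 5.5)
The plan is to reduce every tour-size-$\le2$ routing from the single depot $p$ to a matching, and then apply the savings identity of \cref{lem:savings}. First, observe that a feasible routing with tours of at most two targets, each target served exactly once, is determined by the set of two-target tours, which are pairwise disjoint. These pairs form a matching $M$ of $\mathcal{T}$. In Scenario~3 the capacity constraint forces every pair in $M$ to satisfy $w_i+w_j\le C$, so $M$ lies in the graph with the infeasible edges removed. In Scenario~1, since $w_i\le C/2$, every pair is feasible. The tour $p\to i\to j\to p$ has the same cost in either orientation, so no further choice affects cost. Conversely, every matching $M$ of the feasible graph yields a feasible routing. Hence routings and feasible matchings correspond bijectively up to orientation.

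By \cref{lem:savings}, $\mathrm{cost}(M)=2D_p-\sum_{(i,j)\in M}s_{ij}$, and $2D_p$ does not depend on $M$. Minimising cost over feasible routings is therefore exactly maximising $\sum_{M}s_{ij}$ over matchings of the feasible graph. A maximum-weight matching under the weights $s_{ij}$ is thus optimal, which covers the general claim and the Scenario~3 claim at once. Since $s_{ij}\ge0$, there is no benefit to excluding a zero-saving edge, but ties are harmless either way.

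For the Scenario~1 formula with even $n$, the key step is to show that some maximum-saving matching may be taken to be perfect. Take any matching $M$ and two unmatched targets $i,j$. Adding $(i,j)$ changes the total saving by $s_{ij}\ge0$, so augmenting $M$ to a perfect matching never decreases the saving. This is possible on the complete graph because $n$ is even and all pairs are feasible. Hence $\mathrm{OPT}_2=\min_{M \text{ perfect}}\mathrm{cost}(M)$. For perfect $M$, \cref{lem:savings} gives $\mathrm{cost}(M)=D_p+W(M)$, so the minimum is $D_p+W_m$. It is attained by the minimum-weight perfect matching, which is what \textsc{MinWeightMatching} returns on the complete graph with an even number of vertices.

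The main obstacle is not the algebra but the precise alignment of objects in two places. The first is the bijection between routings and matchings, where I need to be careful that tour size $\le2$ and "visited once" rule out anything else. The second is the claim that \textsc{MinWeightMatching} computes the \emph{perfect} minimum-weight matching rather than, say, the empty matching. I would resolve the second point by appealing to the stated behaviour of the routine: it returns a minimum-weight (near-)perfect matching on a complete graph. For Scenario~3, the theorem asserts optimality of the maximum-saving matching itself, so I would state clearly that this is the object computed there. Whether the implementation's minimum-weight call on the pruned graph coincides with it is a separate issue, which the paper defers (\cref{rem:s3-gap}).
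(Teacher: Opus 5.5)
Your proposal is correct and takes essentially the same route as the paper. The savings identity of \cref{lem:savings} turns cost minimisation into maximum-saving matching over the feasible-pair graph, and in Scenario~1 with even $n$ a maximum-saving matching can be taken perfect, so the cost is $D_p+W(M)$, minimised at $D_p+W_m$; your augmentation argument (adding an edge with $s_{ij}\ge0$ never lowers the saving) and the routing--matching correspondence spell out steps the paper asserts in one line.
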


\begin{proof}
By \cref{lem:savings}, $\mathrm{cost}(M)=2D_p-\Sigma(M)$ with $\Sigma(M)=\sum_{(i,j)\in M}s_{ij}$ and $2D_p$ fixed, so minimising cost is exactly maximising the total saving $\Sigma(M)$. As the savings are non-negative, this is a maximum-weight matching problem, which the Blossom routine solves exactly; the returned routing is therefore optimal. In Scenario~1 with even $n$ the graph is complete, so a maximum-saving matching may be taken perfect; there $\Sigma(M)=D_p-W(M)$ (\cref{lem:savings}), so maximising the saving is minimising $W(M)$, whose minimum is $W_m$, giving cost $2D_p-(D_p-W_m)=D_p+W_m$. Scenario~3 removes only the infeasible edges before matching, leaving Steps~1--2 unchanged, so the maximum-saving matching on the restricted graph is again optimal.
\end{proof}

\begin{lemma}[Odd $n$: the depot dummy]\label{lem:dummy}
On a perfect matching $\sum_{(i,j)\in M}\big(d(p,i)+d(p,j)\big)=D_p$ is constant, so minimising $W(M)$ and maximising $\Sigma(M)$ coincide. When $n$ is odd, adjoin a dummy target $\bar\delta$ at the depot, with $d(p,\bar\delta)=0$ and $d(k,\bar\delta)=d(p,k)$. Its partner $t^\star$ is then served by the round trip $p\to t^\star\to\bar\delta\to p$, a singleton at zero saving, and the minimum-weight perfect matching of the augmented graph jointly selects the target to leave alone and the optimal pairing of the rest.
\end{lemma}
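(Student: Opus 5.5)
The plan is to reduce the odd case to the even-case identity of \cref{lem:savings} applied to the augmented target set $\mathcal{T}\cup\{\bar\delta\}$. The proof has four steps.

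\textbf{Step 1: the constant-sum claim.} This is immediate. In a perfect matching every target appears in exactly one pair. So $\sum_{(i,j)\in M}(d(p,i)+d(p,j))$ is a reindexing of $\sum_k d(p,k)=D_p$. Hence $\Sigma(M)=D_p-W(M)$, and minimising $W$ is the same as maximising $\Sigma$.

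\textbf{Step 2: check that the dummy behaves like a target.} The augmented distance $d'$ sets $d'(k,\bar\delta)=d(p,k)$ and $d'(p,\bar\delta)=0$. In effect $\bar\delta$ is a copy of $p$. Pseudometric axioms, in particular the triangle inequality, are therefore inherited from $d$. This is the only property \cref{lem:savings} uses.

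\textbf{Step 3: compute the cost and saving of the dummy pair.} The augmented set has even size $n+1$, so \cref{lem:savings} applies to any perfect matching $M'$ on it. For the pair $(t^\star,\bar\delta)$ the tour $p\to t^\star\to\bar\delta\to p$ costs $d(p,t^\star)+d(p,t^\star)+0$. This is exactly the singleton round trip. Its saving is $s_{t^\star\bar\delta}=d(p,t^\star)+0-d(p,t^\star)=0$.

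\textbf{Step 4: set up the bijection with real routings.} Perfect matchings $M'$ of the augmented set correspond bijectively to pairs $(t^\star,M)$, where $t^\star\in\mathcal{T}$ and $M$ is a perfect matching of $\mathcal{T}\setminus\{t^\star\}$. Let $D_p'=D_p+d(p,\bar\delta)=D_p$. Then \cref{lem:savings} gives
\[
\mathrm{cost}(M')=2D_p-\Sigma(M)=D_p'+W'(M'),
\]
where $W'(M')=W(M)+d(p,t^\star)$. This equals the cost of the real routing that pairs $M$ and serves $t^\star$ alone. Hence the minimum-weight perfect matching of the augmented graph minimises real cost over all choices of $t^\star$ and $M$ simultaneously.

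\textbf{Optimality over all routings.} Finally, the comparison class must be all tour-size-$\le 2$ routings, not just near-perfect ones. I would use \cref{thm:single-opt}: an optimal routing comes from a maximum-saving matching. In Scenario~1 savings are non-negative and all edges are present, so two unmatched targets can always be paired without loss. Thus some optimum leaves exactly one target single, and that routing lies in the class of Step~4.

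The main obstacle I expect is the Scenario~3 case. There removed edges may force several singletons, so a perfect matching of the augmented graph need not exist. I would restrict the lemma's claim to the complete graph of Scenario~1, and note that in Scenario~3 the algorithm falls back to maximum-saving matching, which is handled by \cref{thm:single-opt}.
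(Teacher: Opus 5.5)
Your proposal is correct and follows the paper's argument. Both rest on the constant-sum identity $\Sigma(M)=D_p-W(M)$ on perfect matchings, the zero saving of the dummy edge $(t^\star,\bar\delta)$, and the identification of the augmented matching weight with $D_p-\Sigma(M)$, which is equivalent to your $\mathrm{cost}(M')=D_p+W'(M')$. Your additional step, that on the complete graph some optimum leaves exactly one singleton because pairing two singletons never loses saving, together with your restriction to Scenario~1, makes explicit what the paper leaves implicit and agrees with its own caveat in \cref{rem:s3-gap}.
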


\begin{proof}
Perfectness gives $\sum_{(i,j)\in M}(d(p,i)+d(p,j))=D_p$, hence $\Sigma(M)=D_p-W(M)$ and $\max_M\Sigma(M)=\min_M W(M)$. The dummy edge $(t^\star,\bar\delta)$ has weight $d(t^\star,\bar\delta)=d(p,t^\star)$ and saving $d(p,t^\star)+d(p,\bar\delta)-d(t^\star,\bar\delta)=0$, so it encodes a zero-saving singleton; the augmented matching weight equals $D_p-\Sigma(M)$, which the matcher minimises.
\end{proof}

\begin{remark}[The implementation minimises distance, not saving]\label{rem:s3-gap}
The shipped \textsc{MinWeightMatching} minimises pair distance (LEMON's maximum-weight matcher on shifted weights). On a complete graph with even $n$ this coincides with saving maximisation (\cref{lem:dummy}), so the implementation attains the optimum of \cref{thm:single-opt} in Scenario~1. On the feasibility-restricted graph of Scenario~3 the two objectives can differ, since a short feasible edge may carry a smaller saving than a longer one, so attaining single-depot optimality there requires the matcher in maximum-saving mode. The $2$-approximation guarantees below use only $s_{ij}\ge0$ and are unaffected.
\end{remark}

\subsubsection{Scenario 2 (tours of up to four targets)}

\begin{lemma}[A merge never increases cost]\label{lem:merge}
Let two matched pairs $A$ and $B$ with total demand at most $C$ be served from a single depot $p$. Relabel their four targets so that $a_1$ is a depot-nearest target, $a_2$ its partner in $A$, and $B=\{b_1,b_2\}$. The merged tour $p\to a_1\to a_2\to b_1\to b_2\to p$ is shorter than the two separate pair tours by exactly the super-edge saving
\[
\sigma(A,B)=\big(d(a_2,p)+d(p,b_1)\big)-d(a_2,b_1)\ \ge\ 0 .
\]
The nearest-first ordering computed by Algorithms~\ref{alg:cluster_first} and~\ref{alg:match_first} does at least as well, since it starts at the depot-nearest target and hence has this tour in its search space.
\end{lemma}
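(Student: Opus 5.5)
The plan is a direct cost comparison, in the same spirit as the savings identity of \cref{lem:savings}. I would write out both sides explicitly and let the common terms cancel.

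The two separate pair tours, in the orientations that the merged tour reuses, are $p\to a_1\to a_2\to p$ and $p\to b_1\to b_2\to p$. Their total cost is
\[
d(p,a_1)+d(a_1,a_2)+d(a_2,p)+d(p,b_1)+d(b_1,b_2)+d(b_2,p).
\]
The merged tour $p\to a_1\to a_2\to b_1\to b_2\to p$ costs
\[
d(p,a_1)+d(a_1,a_2)+d(a_2,b_1)+d(b_1,b_2)+d(b_2,p).
\]
Subtracting, every term cancels except $d(a_2,p)+d(p,b_1)-d(a_2,b_1)$. This is exactly $\sigma(A,B)$. Its non-negativity is the triangle inequality $d(a_2,b_1)\le d(a_2,p)+d(p,b_1)$, used exactly as for $s_{ij}\ge 0$ in \cref{lem:savings}.

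One subtlety needs checking here. The cost of a pair tour does not depend on its orientation, because distances are symmetric. So I may orient $A$ as $a_1\to a_2$ and $B$ as $b_1\to b_2$ without changing the cost of either separate tour. Capacity feasibility of the merged tour is given by the hypothesis that the total demand is at most $C$. This is the capacity check $w_a+w_b+w_c+w_d\le C$ in Algorithm~\ref{alg:match_and_merge}.

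For the final claim, the ordering step of Algorithms~\ref{alg:cluster_first} and~\ref{alg:match_first} fixes the depot-nearest target first. It then minimises over all orderings of the remaining three targets, which is a constant-size enumeration by \cref{lem:blocks}. Since $a_1$ is a depot-nearest target, the sequence $a_1,a_2,b_1,b_2$ lies in that search space. The chosen tour therefore costs no more than the merged tour above.

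The only delicate point, which I expect to be the main obstacle, is ties for the depot-nearest target. If the implementation breaks a tie in favour of a different target $a_1'$, I would relabel so that $a_1'$ plays the role of $a_1$. The statement says ``a depot-nearest target'', so any tied choice is allowed. The argument above is symmetric in which pair contains the starting target and never uses nearness, so it goes through unchanged with $a_1'$ as the start.
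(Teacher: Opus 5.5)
Your proposal is correct and is essentially the paper's own proof: expand both costs, cancel the common terms to leave $\sigma(A,B)$, get non-negativity from the triangle inequality, and note that the nearest-first enumeration contains the order $a_1,a_2,b_1,b_2$ in its search space. Your remarks on pair-tour orientation (symmetric distances) and on tie-breaking for the depot-nearest target are harmless additions that the paper leaves implicit.
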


\begin{proof}
The two separate tours total $d(p,a_1)+d(a_1,a_2)+d(a_2,p)+d(p,b_1)+d(b_1,b_2)+d(b_2,p)$, while the merged tour is $d(p,a_1)+d(a_1,a_2)+d(a_2,b_1)+d(b_1,b_2)+d(b_2,p)$. Their difference is $d(a_2,b_1)-d(a_2,p)-d(p,b_1)=-\sigma(A,B)$, and $\sigma(A,B)\ge0$ by the triangle inequality. As the algorithms fix the depot-nearest first target and minimise over the remaining orders, the tour they return is no longer than this admissible one.
\end{proof}

\begin{theorem}[Scenario 2, single depot]\label{thm:scenario2}
For a single depot with $n$ even, suppose every merge performed by $\textsc{MatchAndMerge}$ is a capacity-feasible four-target merge (no $3$-target split occurs). Then the Scenario-2 routing has cost
\[
\mathrm{cost}_{\mathrm{S2}}\ \le\ \mathrm{OPT}_2-\sum_{\text{merges}}\sigma(A,B)\ \le\ \mathrm{OPT}_2=D_p+W_m,
\]
i.e.\ it is never worse than the optimal tour-size-$\le2$ routing, and strictly better whenever some merge has $\sigma(A,B)>0$.
\end{theorem}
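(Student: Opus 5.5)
The plan is to compare the Scenario-2 routing group by group against the routing given by the first-level matching $M_1$. The intermediate target is an exact accounting identity. The baseline $\mathrm{OPT}_2=D_p+W_m$ is already supplied by \cref{thm:single-opt}, and the per-merge gain by \cref{lem:merge}. Both inequalities in the statement will then follow immediately.

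\textbf{Step 1: the loop runs exactly once.} Since there is a single depot and $n$ is even, \textsc{MatchAndMerge} is called on all of $\mathcal{T}$. In the first iteration $H_1$ is the complete graph on an even number of vertices, so $M_1$ is a minimum-weight perfect matching with $W(M_1)=W_m$. Consequently the set $\mathcal{L}'$ of targets unmatched by $M_1$ is empty. By hypothesis no $3$-target split occurs, so nothing is pushed back into $\mathcal{L}'$. This holds both on the branch where fewer than two pair-nodes exist and on the $M_2$ branch. Hence $\mathcal{L}=\emptyset$ after the first pass, the while-loop terminates, and no singleton is added.

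The output $\mathcal{G}$ is therefore a partition of $\mathcal{T}$ into two kinds of groups:
\begin{itemize}
\item pairs of $M_1$ left unmerged (unmatched by $M_2$, or all pairs when there are fewer than two pair-nodes);
\item four-target groups $A\cup B$, one for each $\{A,B\}\in M_2$, each capacity-feasible.
\end{itemize}

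\textbf{Step 2: cost accounting.} Consider the reference routing that serves every pair of $M_1$ as its own tour $p\to i\to j\to p$. By \cref{lem:savings} its cost is $D_p+W(M_1)=D_p+W_m$, which equals $\mathrm{OPT}_2$ by \cref{thm:single-opt}. Passing to the Scenario-2 routing only replaces, for each merged $\{A,B\}$, the two pair tours of $A$ and $B$ by one four-target tour.

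By \cref{lem:merge}, the merged tour as ordered by the algorithm costs at most the two pair tours minus $\sigma(A,B)$. The algorithm fixes the depot-nearest first target and minimises over the remaining orders, so its tour is no longer than the explicit tour of the lemma. The merged blocks are pairwise disjoint because $M_2$ is a matching on pair-nodes, so these reductions are additive. This gives
\[
\mathrm{cost}_{\mathrm{S2}}\le D_p+W_m-\sum_{\text{merges}}\sigma(A,B).
\]
Since each $\sigma(A,B)\ge 0$ by the triangle inequality, the right-hand side is at most $\mathrm{OPT}_2$. Strictness follows whenever some $\sigma(A,B)>0$.

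\textbf{Main obstacle.} The computation itself is routine; the delicate part is Step 1, which pins down the structure of the output. One must check that no stray singletons or extra iterations arise: evenness of $n$ makes $M_1$ perfect, and the no-split hypothesis keeps $\mathcal{L}'$ empty. Only then can the cost be written as a sum over disjoint pair blocks of $M_1$. With that structure in hand, each block either keeps its pair-tour cost or is absorbed into a merge that lowers cost by $\sigma$.
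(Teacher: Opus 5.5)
Your proof is correct and follows the paper's argument. The perfect level-1 matching gives the baseline $\mathrm{OPT}_2=D_p+W_m$ via \cref{thm:single-opt}, \cref{lem:merge} summed over the disjoint merges gives the bound, and your Step~1 just makes explicit the single-pass structure the paper leaves implicit; one harmless slip is that when $M_2$ matches two pair-nodes whose combined demand exceeds $C$ and no feasible three-target merge exists, both stay as pair tours, a case your bullet list omits but your accounting already covers, since those pairs keep their pair-tour cost.
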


\begin{proof}
The level-1 matching of $\textsc{MatchAndMerge}$ is a minimum-weight perfect matching, so by \cref{thm:single-opt} the pre-merge routing costs $\mathrm{OPT}_2=D_p+W_m$. Each four-target merge replaces two pair tours by one and lowers the total by at least $\sigma(A,B)\ge0$ (\cref{lem:merge}), while unmerged pairs keep their cost. Telescoping over the merges gives the bound.
\end{proof}

\begin{remark}
When four demands exceed $C$, Algorithm~\ref{alg:match_and_merge} forms a three-target group and re-queues one target; unlike a four-merge, this split is not guaranteed cost-non-increasing. Scenario~2 is thus a heuristic refinement that provably dominates the tour-size-$\le2$ optimum whenever every merge is feasible (e.g.\ $w_i\le C/4$) and performs strongly otherwise (\cref{sec:results}).
\end{remark}

\subsection{Multi-depot analysis}

\subsubsection{Scenarios 1 and 3 (tour size at most two)}

\begin{theorem}[Multi-depot lower bound]\label{thm:md-lb}
For any number of depots and tour size $\le2$, every feasible routing has cost at least $\Delta=\sum_{k\in\mathcal{T}}\delta_k$.
\end{theorem}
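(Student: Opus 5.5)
The plan is to charge each tour's cost to the targets it serves and show that every target receives a charge of at least its nearest-depot distance $\delta_k$. Summing over the tours of a feasible routing then gives the bound. A tour visits at most two targets, so it has one of two shapes. Either it is a singleton tour $q\to k\to q$ from some depot $q$, or it is a pair tour $q\to i\to j\to q$ (or $q\to j\to i\to q$).

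For a singleton tour, the cost is $2d(q,k)\ge 2\delta_k\ge\delta_k$, which holds because $\delta_k$ is a minimum over all depots and distances are non-negative. For a pair tour, I would use $d(i,j)\ge 0$ to drop the middle leg, giving
\[
d(q,i)+d(i,j)+d(j,q)\ \ge\ d(q,i)+d(j,q)\ \ge\ \delta_i+\delta_j .
\]
So in every case the tour's cost is at least the sum of $\delta_k$ over the targets it visits. The triangle inequality is not even needed here; only non-negativity and the definition of $\delta_k$ as a minimum over depots are used.

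Feasibility means each target is served by exactly one tour, so the tours partition $\mathcal{T}$. Summing the per-tour inequalities over all tours therefore gives $\mathrm{cost}\ge\sum_{k\in\mathcal{T}}\delta_k=\Delta$. This holds for any number of depots, including tours that start at a depot other than $a(k)$, since the bound uses only $d(q,k)\ge\delta_k$ for every depot $q$.

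I do not expect any real obstacle; the only point needing care is the pair tour. A tighter attempt that keeps the edge $d(i,j)$ and tries to use the triangle inequality is unnecessary, because simply discarding that edge already suffices. The same argument would extend verbatim to longer tours by keeping only the first and last legs and bounding the interior targets by $0$, but that extension gives only a weaker bound, so I would restrict the write-up to tour size $\le 2$ as stated.
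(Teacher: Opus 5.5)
Your proposal is correct and follows the paper's proof essentially line for line: a singleton tour costs $2d(q,k)\ge\delta_k$, a pair tour costs at least $\delta_i+\delta_j$ once the leg $d(i,j)$ is dropped, and summing over the partition of $\mathcal{T}$ by tours gives $\Delta$. You are also right that the paper's argument uses only non-negativity of distances and the minimality of $\delta_k$, not the triangle inequality.
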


\begin{proof}
The tours partition $\mathcal{T}$. A pair tour $\{i,j\}$ from a depot $q$ costs $d(q,i)+d(i,j)+d(j,q)\ge d(q,i)+d(q,j)\ge\delta_i+\delta_j$, and a singleton for $k$ costs $2d(q,k)\ge\delta_k$. Each target thus contributes at least $\delta_k$ to the tour serving it, and summing over the partition gives cost $\ge\Delta$.
\end{proof}

This bound is in fact achieved: with tours capped at two, the routing decomposes into a matching with independently chosen depots, so the problem is polynomially solvable, as we now show.

\begin{theorem}[Exact multi-depot optimum in polynomial time; approximation ratio $1$]\label{thm:md-exact}
For any number of depots and tour size $\le2$ (Scenarios~1 and~3), define for every capacity-feasible pair $\{i,j\}$ the \emph{best-depot pair cost} and \emph{true saving}
\[
c_{ij}=\min_{q\in\mathcal{D}}\big[d(q,i)+d(i,j)+d(q,j)\big],
\qquad
g^{*}_{ij}=2\delta_i+2\delta_j-c_{ij},
\]
and let $G^{*}=\max_M\sum_{(i,j)\in M}g^{*}_{ij}$ over matchings of the feasible-pair graph. Then
\[
\mathrm{OPT}_2=2\Delta-G^{*},
\]
attained by serving the pairs of a maximum-$g^{*}$ matching, each from a depot achieving $c_{ij}$, and every unmatched target as a singleton from its nearest depot. It is computable in $O(n^2K+n^3)$ time.
\end{theorem}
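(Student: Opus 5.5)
The argument mirrors the single-depot savings accounting of \cref{lem:savings,thm:single-opt}, with the depot of each tour optimised independently. Fix any feasible tour-size-$\le2$ routing. Its pair tours form a matching $M$ of capacity-feasible pairs, and every other target is a singleton. I will first observe that the depot choices decouple: each tour's cost depends only on its own depot. Replacing the depot of a singleton $k$ by $a(k)$ therefore does not increase the cost, giving $2\delta_k$. Likewise, replacing the depot of a pair $\{i,j\}$ by a minimiser of $d(q,i)+d(i,j)+d(q,j)$ does not increase the cost, giving $c_{ij}$. Since depots may dispatch arbitrarily many tours and the capacity constraint concerns only the pair's demands, these replacements preserve feasibility. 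Hence
\[
\mathrm{OPT}_2=\min_M\Big[\sum_{(i,j)\in M}c_{ij}+\sum_{k\notin V(M)}2\delta_k\Big].
\]

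Next comes the savings rewriting, the multi-depot analogue of \cref{lem:savings}. I add and subtract $2\delta_i+2\delta_j$ for each matched pair, so the bracket becomes $2\Delta-\sum_{(i,j)\in M}g^*_{ij}$. Minimising cost is then exactly maximising $\sum_M g^*_{ij}$ over matchings of the feasible-pair graph, which gives $\mathrm{OPT}_2=2\Delta-G^*$. The same computation shows that the described routing attains this value: pairs are served from a depot achieving $c_{ij}$ and the remaining targets as singletons from their nearest depots.

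One subtlety needs care: unlike the single-depot $s_{ij}$, the saving $g^*_{ij}$ can be negative when $i$ and $j$ have far-apart nearest depots. This is harmless, because a maximum-weight (not necessarily perfect) matching simply omits negative edges. I would either drop edges with $g^*_{ij}\le0$ or note that the Blossom maximum-weight matching handles them, so no perfectness is imposed. This is the main point to state carefully, since it is where the argument departs from \cref{lem:dummy}. There is also a sanity check: $c_{ij}\ge\delta_i+\delta_j$ gives $g^*_{ij}\le\delta_i+\delta_j$, so $\mathrm{OPT}_2\ge\Delta$, consistent with \cref{thm:md-lb}.

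Finally, I will do the running-time accounting. Computing all $\delta_k$ takes $O(nK)$. Computing $c_{ij}$ and $g^*_{ij}$ for all $\Theta(n^2)$ pairs over $K$ depots takes $O(n^2K)$, and the feasibility filter takes $O(n^2)$. The maximum-weight matching on $n$ vertices takes $O(n^3)$ by \cref{lem:blocks}, the routing is assembled in $O(n)$, and together these give $O(n^2K+n^3)$.
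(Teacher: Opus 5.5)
Your proposal is correct and follows the paper's own proof essentially step for step. You fix the depot of each tour independently to get $\mathrm{OPT}_2=\min_M\big[\sum_{(i,j)\in M}c_{ij}+\sum_{k\notin V(M)}2\delta_k\big]$, re-account via $c_{ij}=2\delta_i+2\delta_j-g^{*}_{ij}$, observe that negative-$g^{*}$ edges never enter a maximum-weight matching, and use the same $O(nK)+O(n^2K)+O(n^3)$ running-time accounting; your explicit feasibility-preservation remark and the check $\mathrm{OPT}_2\ge\Delta$ are harmless additions that the paper leaves implicit.
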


\begin{proof}
Any feasible routing is a matching $M$ of served pairs together with singletons for the rest, and the depot of each tour may be chosen independently of the others. A pair $\{i,j\}$ costs at least $c_{ij}$ (equality at a minimising depot); a singleton $k$ costs at least $2\delta_k$ (equality at its nearest depot). Hence
\[
\mathrm{OPT}_2=\min_M\Big[\sum_{(i,j)\in M}c_{ij}+\!\!\sum_{k\ \text{unmatched}}\!\!2\delta_k\Big]
=\min_M\Big[2\Delta-\sum_{(i,j)\in M}g^{*}_{ij}\Big]=2\Delta-G^{*},
\]
using $c_{ij}=2\delta_i+2\delta_j-g^{*}_{ij}$ to re-account the matched targets' singleton terms. A maximum-weight matching under $g^{*}$ attains $G^{*}$ (negative-weight edges never enter it), and the stated routing realises that cost.

The running time splits into two parts. The \emph{best-depot precomputation} fixes, for every candidate pair, its cheapest serving depot: computing all $\delta_k$ costs $O(nK)$, and scoring each of the $\Theta(n^2)$ pairs against the $K$ depots to obtain the $c_{ij}$ (equivalently the $g^{*}_{ij}$) costs $O(n^2K)$. The \emph{global maximum-weight matching} over all $n$ targets then costs $O(n^3)$ (\cref{lem:blocks}). The total is $O(n^2K+n^3)$.
\end{proof}

\begin{remark}[Why we approximate rather than solve exactly]\label{rem:md-exact}
\Cref{thm:md-exact} places tour size two at a tractability boundary: bounding tours to two targets reduces the NP-hard MDCVRP~\cite{Lenstra1981} to matching. It subsumes \cref{thm:single-opt} ($K=1$ gives $g^{*}_{ij}=s_{ij}$) and sharpens \cref{thm:md-lb} to $\Delta\le\mathrm{OPT}_2\le2\Delta$. We do not ship it, because we want a routing that is near-optimal \emph{and} fast, and the exact method is asymptotically slower for a negligible quality gain. It pays an $O(n^2K)$ best-depot precomputation and a global $O(n^3)$ matching, whereas Cluster-First fixes each target to its nearest depot (the restriction behind its factor-$2$ guarantee) and matches within clusters, in $O(nK+\sum_p n_p^3)$ (\cref{thm:cf-time}). At $n=1000$, $K=20$ and $10^8$ operations per second, the precomputation alone ($\approx2\times10^7$) already exceeds Cluster-First's entire cost ($\approx2.5\times10^6$), and the $O(n^3)\approx10^9$ matching pushes the exact solve to about $10$ seconds per instance against $25$ milliseconds for Cluster-First, a $400\times$ gap. The gap only widens in the dynamic setting, where the exact method rebuilds everything on each insertion while Cluster-First re-solves one cluster. Cluster-First and Match-First are therefore a deliberate choice, not a concession to intractability.
\end{remark}

\begin{theorem}[Cluster-First is a $2$-approximation]\label{thm:cf-approx}
In Scenarios~1 and~3, for any number of depots, Algorithm~\ref{alg:cluster_first} (Cluster-First) returns a routing of cost at most $2\Delta\le2\,\mathrm{OPT}_2$. Its approximation ratio is exactly $2$.
\end{theorem}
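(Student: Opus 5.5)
The plan has two parts: an upper bound showing that Cluster-First never pays more than $2\Delta$, and a family of instances showing the factor $2$ cannot be improved.

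\emph{Upper bound.} Fix a depot $p$ and its cluster $\mathcal{T}_p$. By construction every $k\in\mathcal{T}_p$ has $a(k)=p$, so $d(p,k)=\delta_k$. Within the cluster, Cluster-First returns a tour-size-$\le2$ routing from the single depot $p$. Its served pairs form a matching $M_p$ of capacity-feasible pairs, and every other target (matched to the dummy $\bar\delta$ or left unmatched) is served by a round trip. By \cref{lem:savings} its cost is
\[
2\sum_{k\in\mathcal{T}_p}d(p,k)-\sum_{(i,j)\in M_p}s_{ij}\;\le\;2\sum_{k\in\mathcal{T}_p}\delta_k ,
\]
because each saving satisfies $s_{ij}\ge0$ by the triangle inequality. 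This holds whether the matcher minimises distance or maximises saving, and whether or not the dummy is used: a pair matched with $\bar\delta$ has zero saving by \cref{lem:dummy}. The bound therefore does not depend on the gap discussed in \cref{rem:s3-gap}. Summing over depots, and using that the clusters partition $\mathcal{T}$, gives total cost at most $2\Delta$. \Cref{thm:md-lb} then yields $2\Delta\le2\,\mathrm{OPT}_2$.

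\emph{Tightness.} I will construct instances with ratio tending to $2$. Take two depots $q_1,q_2$ at distance $2L$, and two targets $i,j$ placed symmetrically near the midpoint, at distance $\epsilon$ from each other, with $i$ nearer $q_1$ and $j$ nearer $q_2$. Choose $w_i,w_j\le C/2$ so the pair is feasible in both scenarios. Cluster-First puts $i$ and $j$ in different clusters, so each is served as a singleton, for a cost of about $2\delta_i+2\delta_j\approx4L$. The optimum pairs them from either depot, costing about $2L+\epsilon$. The ratio tends to $2$ as $\epsilon\to0$. Equivalently, the ratio is $2\Delta/(\Delta+O(\epsilon))$.

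The main obstacle is getting the geometry exactly right. Both targets must be strictly nearer to different depots, and the pair cost $d(q,i)+d(i,j)+d(j,q)$ must be computed correctly against $\delta_i+\delta_j$. A clean choice is to place $i$ and $j$ on the segment $q_1q_2$, at distances $L-\epsilon/2$ and $L+\epsilon/2$ from $q_1$. With this placement the bound $\mathrm{OPT}_2\ge\Delta$ is attained up to $\epsilon$, so the ratio is asymptotically exactly $2$. Since the upper bound shows the ratio never exceeds $2$, the approximation ratio is exactly $2$.
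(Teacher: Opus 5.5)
Your proposal is correct and follows the paper's proof. For the upper bound, you apply \cref{lem:savings} to each cluster (where $d(p,k)=\delta_k$), drop the non-negative savings, sum to $2\Delta$ and invoke \cref{thm:md-lb}; for tightness you use the same two-depot collinear instance with the targets straddling the midpoint, and your explicit choice $w_i,w_j\le C/2$ to keep the pair feasible in Scenario~3 is a detail the paper leaves implicit.
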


\begin{proof}
Cluster-First routes each cluster $\mathcal{T}_p$ independently. Since $p$ is the nearest depot of every $k\in\mathcal{T}_p$, $D'_p:=\sum_{k\in\mathcal{T}_p}d(p,k)=\sum_{k\in\mathcal{T}_p}\delta_k$, so $\sum_p D'_p=\Delta$. By \cref{lem:savings} cluster $p$ costs $2D'_p-\sum s_{ij}\le2D'_p$ (savings are non-negative), for any matching returned. Summing gives $\mathrm{cost}_{\mathrm{CF}}\le2\Delta$, and $\mathrm{OPT}_2\ge\Delta$ (\cref{thm:md-lb}) gives $\mathrm{cost}_{\mathrm{CF}}\le2\,\mathrm{OPT}_2$.

\emph{Tightness.} Place depots at $(0,0)$ and $(2L,0)$ and targets at $(L-\varepsilon,0)$ and $(L+\varepsilon,0)$. Cluster-First assigns the two targets to different depots and serves two singletons of total cost $4(L-\varepsilon)$, whereas pairing them from either depot costs $2L+2\varepsilon$; the ratio tends to $2$ as $\varepsilon/L\to0$. The factor $2$ therefore cannot be improved; the loss comes exactly from pairs that straddle depot regions.
\end{proof}

\begin{remark}[Why no separate savings computation is needed]\label{rem:cf-impl}
On complete cluster graphs (Scenario~1), distance-minimising \textsc{MinWeightMatching} plus the depot dummy of \cref{lem:dummy} already maximises the per-cluster saving, so Cluster-First routes each cluster optimally with no explicit savings computation. In Scenario~3 the two objectives may differ (\cref{rem:s3-gap}), but the $2$-approximation still holds, as its proof uses only $s_{ij}\ge0$.
\end{remark}

\begin{proposition}[Match-First with the savings restriction is a $2$-approximation]\label{prop:mf-approx}
In Scenarios~1 and~3, restrict the global matching to \emph{beneficial} pairs, those with positive centroid saving
\[
g_{ij}=\big(2\delta_i+2\delta_j\big)-\big(d(p_c,i)+d(i,j)+d(p_c,j)\big)>0,
\qquad
p_c=\arg\min_{p\in\mathcal{D}}d(p,\bar x_{ij}),
\]
where $\bar x_{ij}$ is the centroid of $\{i,j\}$, and serve every unmatched target as a singleton. Then Match-First returns a routing of cost at most $2\Delta\le2\,\mathrm{OPT}_2$.
\end{proposition}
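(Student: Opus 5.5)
The plan is to reuse the savings accounting of \cref{lem:savings}, now in its multi-depot form with baseline $2\Delta$, as in the proof of \cref{thm:md-exact}. Any tour-size-$\le2$ routing produced by Match-First consists of a matching $M$ of beneficial pairs, each served from its centroid depot $p_c$, together with singletons for the remaining targets. I will first check that each singleton $k$ is routed from the depot nearest its centroid, which is $k$ itself. That depot is therefore $a(k)$, so the singleton costs exactly $2\delta_k$.

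Next, I will write the cost of the routing by re-accounting each matched pair against the singleton baseline:
\[
\mathrm{cost}_{\mathrm{MF}}=\sum_{k\ \text{unmatched}}2\delta_k+\sum_{(i,j)\in M}\big(d(p_c,i)+d(i,j)+d(p_c,j)\big)=2\Delta-\sum_{(i,j)\in M}g_{ij}.
\]
This uses exactly the definition of $g_{ij}$ in the statement: the pair cost equals $2\delta_i+2\delta_j-g_{ij}$. The restriction to beneficial pairs gives $g_{ij}>0$ for every edge the matcher can use, whatever matching it returns. This holds in particular for the distance-minimising matching that is actually computed, so the objective mismatch of \cref{rem:s3-gap} does not matter. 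Hence $\mathrm{cost}_{\mathrm{MF}}\le 2\Delta$. In Scenario~3 the capacity filter only removes further edges, so the same bound holds. Finally, \cref{thm:md-lb} gives $\Delta\le\mathrm{OPT}_2$, which yields $2\Delta\le 2\,\mathrm{OPT}_2$.

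The main obstacle is making sure the algorithm as implemented really matches this accounting. Two points need care. First, the savings filter must use the same centroid depot $p_c$ that Algorithm~\ref{alg:match_first} later uses to route the pair, so that the pair's realised cost is exactly the one scored in $g_{ij}$. I will argue this from the shared $\arg\min$ over $d(p,\bar x_{ij})$, with ties broken identically in both places. Second, in Scenario~1 the statement imposes the beneficial restriction, so I will read the proposition as applying Match-First with this filter in both scenarios. Without the filter, \cref{rem:mf-unbounded} shows that no bound holds.
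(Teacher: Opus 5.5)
Your proposal is correct and follows the paper's proof. You start from the all-singleton baseline $2\Delta$ and note that routing each matched pair from $p_c$ changes the cost by exactly $-g_{ij}<0$, so $\mathrm{cost}_{\mathrm{MF}}=2\Delta-\sum_{(i,j)\in M}g_{ij}\le 2\Delta$, and you finish with $\mathrm{OPT}_2\ge\Delta$. Your extra checks are details the paper leaves implicit: a singleton is routed from $a(k)$ at cost $2\delta_k$ because its centroid is the target itself, and the filter and the routing step must use the same $\arg\min$ depot. Your reading that the filter has to be enabled in Scenario~1 matches the paper's own caveat.
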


\begin{proof}
Serving all targets as singletons from their nearest depots costs $\sum_k 2\delta_k=2\Delta$. Routing a matched pair $\{i,j\}$ from $p_c$ instead changes the total by exactly $-g_{ij}$, and the restriction keeps only pairs with $g_{ij}>0$, so $\mathrm{cost}_{\mathrm{MF}}=2\Delta-\sum_{(i,j)\in M}g_{ij}\le2\Delta$; with $\mathrm{OPT}_2\ge\Delta$ (\cref{thm:md-lb}) this is $\le2\,\mathrm{OPT}_2$. The argument is identical in Scenarios~1 and~3; the implementation applies the filter in Scenario~3 (Algorithm~\ref{alg:match_first}), so the Scenario-1 variant lacks this guarantee unless the filter is enabled there too.
\end{proof}

\begin{proposition}[Unrestricted Match-First cost bounds]\label{thm:mf-multi}
In Scenarios~1 and~3 with even $n$, let $M$ be the global minimum-weight perfect matching, of weight $W_m$. Routing each pair from the depot minimising $d(p,i)+d(p,j)$ gives cost at most $\Delta+2W_m$; the centroid rule of Algorithm~\ref{alg:match_first} gives cost at most $\Delta+3W_m$ on Euclidean instances.
\end{proposition}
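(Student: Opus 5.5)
The plan is to bound the cost of each matched pair tour separately by $\delta_i+\delta_j+c\,d(i,j)$, with $c=2$ for the best-depot rule and $c=3$ for the centroid rule. Since $n$ is even and $M$ is perfect, the pairs of $M$ partition $\mathcal{T}$. Summing over $M$ then gives $\sum_{(i,j)\in M}(\delta_i+\delta_j)=\Delta$ plus $c\sum_{(i,j)\in M}d(i,j)=c\,W_m$, which is the claimed bound. Only the triangle inequality is used, plus, for the centroid rule, the Euclidean fact that the midpoint $\bar x_{ij}$ satisfies $d(\bar x_{ij},i)=d(\bar x_{ij},j)=d(i,j)/2$. In Scenario~3 I will assume, as the statement implicitly does, that the feasible-pair graph admits a perfect matching, so that $M$ exists; the per-pair bound does not use capacities.

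\emph{Best-depot rule.} Let $q$ minimise $d(q,i)+d(q,j)$ and write $d=d(i,j)$. Comparing $q$ with the candidate $a(i)$ gives
\[
d(q,i)+d(q,j)\le d(a(i),i)+d(a(i),j)\le \delta_i+(\delta_i+d)=2\delta_i+d .
\]
Comparing $q$ with $a(j)$ gives $d(q,i)+d(q,j)\le 2\delta_j+d$ in the same way. Taking the minimum of the two, and bounding a minimum by an average, gives $d(q,i)+d(q,j)\le \delta_i+\delta_j+d$. Adding the edge $d(i,j)$, the pair tour $q\to i\to j\to q$ costs at most $\delta_i+\delta_j+2d$. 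Summing over $M$ gives $\Delta+2W_m$.

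\emph{Centroid rule.} Let $m=\bar x_{ij}$ and let $p_c$ be the depot nearest to $m$. By the triangle inequality through $m$,
\[
d(p_c,i)+d(p_c,j)\le 2\,d(p_c,m)+d .
\]
Since $p_c$ is nearest to $m$, we have $d(p_c,m)\le d(a(i),m)\le \delta_i+d/2$, and likewise $d(p_c,m)\le \delta_j+d/2$. Averaging these two gives $2\,d(p_c,m)\le \delta_i+\delta_j+d$. Substituting, $d(p_c,i)+d(p_c,j)\le \delta_i+\delta_j+2d$, so the tour costs at most $\delta_i+\delta_j+3d$. Summing over $M$ gives $\Delta+3W_m$.

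The main obstacle is getting a bound that is symmetric in $i$ and $j$. Anchoring at a single nearest depot only gives $2\delta_i$ rather than $\delta_i+\delta_j$, and summing that would not produce $\Delta$. The fix in both parts is to compare against both $a(i)$ and $a(j)$ and average the two inequalities. For the centroid rule, the extra factor of $W_m$ arises because routing through the midpoint costs an additional $d/2$ on each side.
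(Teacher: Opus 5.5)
Your proof is correct and matches the paper's argument: bound each pair tour by comparing against the nearest depots $a(i)$ and $a(j)$, use that a minimum is at most an average to get the symmetric $\delta_i+\delta_j$ term, and sum over the perfect matching to obtain $\Delta + c\,W_m$. The one difference, in the centroid rule, is cosmetic: you average the two bounds on $d(p_c,\bar x_{ij})$ before adding the legs, whereas the paper averages the two tour-level bounds $2\delta_i+3d(i,j)$ and $2\delta_j+3d(i,j)$.
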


\begin{proof}
Fix a pair $(i,j)\in M$.

\emph{Best-depot rule.} Using depot $a(i)$, the tour costs $\delta_i+d(i,j)+d(a(i),j)$, and $d(a(i),j)\le\delta_i+d(i,j)$, so it is at most $2\delta_i+2d(i,j)$; symmetrically at most $2\delta_j+2d(i,j)$, hence at most $(\delta_i+\delta_j)+2d(i,j)$. Summing over $M$ gives $\Delta+2W_m$.

\emph{Centroid rule.} For the midpoint $\bar x$ of $i,j$, $d(p_G,\bar x)\le d(a(i),\bar x)\le\delta_i+\tfrac12 d(i,j)$, so $d(p_G,i),d(p_G,j)\le\delta_i+d(i,j)$ and the tour costs at most $2\delta_i+3d(i,j)$; symmetrically at most $(\delta_i+\delta_j)+3d(i,j)$, giving $\Delta+3W_m$.
\end{proof}

\begin{remark}[Why the restriction is needed]\label{rem:mf-unbounded}
Without the filter, the matching can pair two targets the optimum keeps as singletons. With depots at $(0,0),(L,0)$ and targets at $(1,0),(L-1,0)$, the only perfect matching has weight $W_m=L-2$ and serves its pair from one depot at cost $\Theta(L)$, while the optimum uses two unit singletons at cost $4$: the ratio is unbounded. Here $g_{ij}<0$, so \cref{prop:mf-approx} discards this edge. The same global freedom, though, lets Match-First form pairings Cluster-First forbids: a pair sharing a nearest depot is routed from it at the optimal cost $\delta_i+d(i,j)+\delta_j$ (\cref{thm:single-opt}), which is why Match-First often beats Cluster-First when pairs respect depot regions.
\end{remark}

\subsubsection{Scenario 2 (tours of up to four targets)}

The lower bound $\Delta$ does not extend to four-target tours: $\mathrm{OPT}_4$ can be as small as $\Delta/2$ (four co-located targets at distance $R$ from their nearest depot are served by one tour of length $2R$, while $\sum_k\delta_k=4R$), and exact optimisation over size-$\le4$ tours is no longer a matching problem. A multi-depot guarantee nonetheless survives for Cluster-First.

\begin{theorem}[Cluster-First is a $4$-approximation in Scenario 2]\label{thm:cf-s2}
Let $\mathrm{OPT}_4$ be the optimal cost over routings with tour size $\le4$. In Scenario~2, Algorithm~\ref{alg:cluster_first} (Cluster-First) returns a routing of cost at most $2\Delta\le4\,\mathrm{OPT}_4$; its approximation ratio is $4$. More generally, for tour size $\le s$ it is an $s$-approximation.
\end{theorem}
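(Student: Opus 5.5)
The plan splits the statement into two independent inequalities: an upper bound $\mathrm{cost}_{\mathrm{CF}}\le 2\Delta$ on the algorithm, and a lower bound $\mathrm{OPT}_s\ge \Delta\cdot 2/s$ on every routing with tours of at most $s$ targets. For $s=4$ these give $\mathrm{OPT}_4\ge\Delta/2$, hence $\mathrm{cost}_{\mathrm{CF}}\le 2\Delta\le 4\,\mathrm{OPT}_4$. In general they give $2\Delta\le s\,\mathrm{OPT}_s$. A separate construction will then show the factor is attained.

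\textbf{Upper bound.} I would argue cluster by cluster, exactly as in the proof of \cref{thm:cf-approx}.
\begin{itemize}
\item The baseline that serves every target of $\mathcal{T}_p$ by its own round trip from $p$ costs $2D'_p=2\sum_{k\in\mathcal{T}_p}\delta_k$.
\item The level-1 pairs of \textsc{MatchAndMerge} each lower this cost by a non-negative saving (\cref{lem:savings}).
\item Each feasible four-target merge lowers it further by $\sigma(A,B)\ge0$ (\cref{lem:merge}).
\end{itemize}
The delicate case is the three-target split. There the group $G^\star$ is served as one tour and the leftover $\ell^\star$ returns to the pool, where it is later either merged or served as a singleton. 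To avoid tracking the iterations, I would instead prove a direct per-tour bound: any tour from $p$ through a group $G$ of size at most $4$, ordered nearest-first with the remaining order optimised, costs at most $\sum_{k\in G}2d(p,k)$. This holds because the triangle inequality lets us short-cut any sequence of round trips into one tour, and the enumerated ordering is no longer than that short-cut. Summing over all groups and all clusters gives $\mathrm{cost}_{\mathrm{CF}}\le 2\Delta$ regardless of which merges occurred.

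\textbf{Lower bound on $\mathrm{OPT}_s$.} Take any tour from depot $q$ serving a set $G$ with $|G|\le s$. For each $k\in G$, the closed tour passes through $k$ and returns to $q$, so its length is at least $2d(q,k)\ge 2\delta_k$. Hence its length is at least $2\max_{k\in G}\delta_k\ge \tfrac{2}{|G|}\sum_{k\in G}\delta_k\ge\tfrac{2}{s}\sum_{k\in G}\delta_k$. Summing over the tours of the optimal routing, which partition $\mathcal{T}$, gives $\mathrm{OPT}_s\ge 2\Delta/s$. For $s=4$ this is the $\Delta/2$ bound, and the co-located example stated just before the theorem shows it is tight.

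\textbf{Tightness of the ratio $4$, and the main obstacle.} I expect this step to be the hardest, since it has to combine the two sources of loss:
\begin{itemize}
\item splitting targets across depot regions, as in the tightness example of \cref{thm:cf-approx};
\item packing four targets into one tour.
\end{itemize}
I would first try the following instance. Place depots at $(0,0)$ and $(2L,0)$. Put two co-located targets at $(L-\varepsilon,0)$ and two at $(L+\varepsilon,0)$, each with demand at most $C/4$. The optimum serves all four in one tour of length about $2L$, while $\Delta\approx 4L$. Cluster-First, however, pairs each co-located couple within its own cluster, at cost about $2L$ per cluster, for a total near $4L$. This gives a ratio near $2$, not $4$, so the example must be changed so that Cluster-First gains nothing from pairing within clusters.

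The idea is to spread the four targets around a point equidistant from four depots, so that each target lands in a different cluster. Cluster-First is then forced to use four singletons of cost about $2R$ each, about $8R$ in total, while the optimum serves all four in one tour of about $2R$. This yields ratio $4$; the Euclidean configuration (depots at the corners of a square, targets near its centre) needs to be checked carefully. For general $s$, the analogous construction places $s$ depots symmetrically on a circle of radius $R$ with $s$ targets near the centre, giving ratio $s$. If the exact-tightness claim proves hard to pin down, I would settle for showing that the ratio approaches $4$ in the limit.
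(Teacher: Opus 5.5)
Your proposal is correct and follows the paper's own proof. The upper bound is the same per-group short-cut argument, giving $\mathrm{cost}_{\mathrm{CF}}\le 2\Delta$ whatever merges or three-target splits occur. The lower bound is the same per-tour estimate $2\max_{k}\delta_k\ge\tfrac{2}{s}\sum_{k}\delta_k$, giving $\mathrm{OPT}_s\ge 2\Delta/s$. Your tightness instance (four targets near the centre of a square of depots, each strictly nearest a different depot) is exactly the paper's \cref{rem:cf-s2-tight}, with ratio $4$ reached only as $\varepsilon\to 0$. The paper does not give your circle-of-$s$-depots extension; it shows the factor $s$ is tight too, provided the demands are at most $C/s$ so that the single optimal tour is capacity-feasible.
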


\begin{proof}
\emph{Upper bound.} For a group $G$ in cluster $\mathcal{T}_p$, the tour $p\to v_1\to\cdots\to v_g\to p$ has length at most $\sum_{k\in G}2d(p,k)=\sum_{k\in G}2\delta_k$: each leg satisfies $d(v_t,v_{t+1})\le d(v_t,p)+d(p,v_{t+1})$ by the triangle inequality, and $p$ is the nearest depot of every $k\in\mathcal{T}_p$. Summing over groups gives $\mathrm{cost}_{\mathrm{CF}}\le2\Delta$.

\emph{Lower bound.} A tour serving a set $S$ with $|S|\le s$ from a depot $q$ has length at least $2\max_{k\in S}d(q,k)\ge\tfrac{2}{|S|}\sum_{k\in S}\delta_k\ge\tfrac{2}{s}\sum_{k\in S}\delta_k$; summing over an optimal solution gives $\mathrm{OPT}_s\ge2\Delta/s$. With $s=4$, $\mathrm{cost}_{\mathrm{CF}}\le2\Delta\le4\,\mathrm{OPT}_4$.
\end{proof}

\begin{remark}[The factor $4$ is essentially tight for nearest-depot clustering]\label{rem:cf-s2-tight}
One might expect \cref{thm:cf-s2} to be loose, since its ingredients $\mathrm{cost}_{\mathrm{CF}}\le2\Delta$ and $\mathrm{OPT}_4\ge\Delta/2$ look tight on opposite instances. It is not. Take four targets at the corners of a square of side $\varepsilon$ around a point $X$, and depots $D_1,\dots,D_4$ with $D_i$ on the ray from $X$ through corner $t_i$, at distance $\delta$ beyond it. Then $t_i$'s unique nearest depot is $D_i$ with $\delta_i=\delta$, so $\Delta=4\delta$. Cluster-First serves each $t_i$ from $D_i$ by a round trip of length $2\delta$, giving $\mathrm{cost}_{\mathrm{CF}}=8\delta=2\Delta$, whereas the optimum serves all four on one tour of length $\to2\delta=\Delta/2$ as $\varepsilon\to0$; the ratio tends to $4$. The looseness is intrinsic to nearest-depot clustering, which cannot serve a mutually close group on one tour when its members prefer different depots. Improving the constant requires relaxing that assignment, which we leave to future work.
\end{remark}

\end{document}